\newtheorem{theorem}{Theorem}
\newtheorem{remark}{Remark}
\newtheorem{definition}{Definition}
\begin{document}

\numberwithin{equation}{section}
\numberwithin{theorem}{section}

\title{A Random Access G-Network: Stability, Stable Throughput, and Queueing Analysis}
\author[*]{Ioannis Dimitriou}
\author[**]{Nikolaos Pappas}
\affil[*]{Department of Mathematics, University of Patras, 26500, Patras, Greece}
\affil[**]{Department of Science and Technology, Link\"{o}ping University, Campus Norrk\"{o}ping, 60174, Sweden}
\affil[ ]{\textit {idimit@math.upatras.gr, nikolaos.pappas@liu.se}}
\renewcommand\Authands{ and }
\providecommand{\keywords}[1]{\textbf{\textit{Keywords---}} #1}
\maketitle

\begin{abstract}
The effect of signals on stability, throughput region, and delay in a two-user slotted ALOHA based random-access system with collisions is
considered. This work gives rise to the development of random access G-networks, which can model virus attacks or other malfunctions and introduce load balancing in highly interacting networks. The users are equipped with infinite
capacity buffers accepting external bursty arrivals. We consider both negative and triggering signals. Negative signals delete a packet from a user queue, while triggering signals cause the instantaneous transfer of packets among user queues. We obtain the exact stability region, and show that the stable throughput region is a subset of it. Moreover, we perform a compact mathematical analysis to obtain exact expressions for the queueing delay by solving a Riemann boundary value problem. A computationally efficient way to obtain explicit bounds for the queueing delay is also presented. The theoretical findings are numerically evaluated and insights regarding the system performance are derived.  
\end{abstract}

\keywords{G-networks, stability analysis, stable throughput, delay, boundary value problems, random access.}

\section{Introduction}
This work focus on the modeling and analysis of a two-user buffered ALOHA type multiple access system by taking into account the presence of signals, which either delete a packet or trigger the instantaneous movement of a packet between the user buffers. More precisely, this work introduce the so-called called Random Access G-network (RAG-Network) to model virus attacks, as well as probabilistic load balancing in highly interacting multiple access communication. 

The exact modelling of the system we consider, is in terms of a two dimensional random walk in the quarter plane, which posses a partial spatial homogeneity, due to the high level of interaction among the queues. The investigation of interacting queueing system is a challenging task with severe mathematical difficulties, which have been initially reported in the seminal work in \cite{fay2}, and further in \cite{fay1,coh1}. They have also received considerable attention due to their applicability in random access networks \cite{nain,szpa1,PappasJCN2016, dimpaptwc,dimpap,adhoc} and in shared processor systems \cite{guillemin2004,dim1,dim3,borst}. On the other hand, G-networks (or queueing networks with negative customers, signals, triggers, etc.) introduced in \cite{GN1}, are characterized by the following feature: in addition to the conventional customers, negative customers or signals arriving to a non-empty queue remove an amount of work from the queue or transfer it to another queue \cite{GN4}. The analysis of this versatile class of networks has significantly enriched queueing theory as well as contributed to the development of real applications in fields such as computers, communications, manufacturing, energy as well as self-aware networks.

The goal of this work is to provide a general framework in introducing and analysing multiple access systems with the feature of negative and triggering signals. Our ultimate goal is to provide a general model to describe virus attacks or other malfunctions, and load balancing in multiple access networks as well as to study their impact on the overall system performance. 
\subsection{Related Work}
\paragraph{G-Networks:}
Motivated by neural
network modelling \cite{gelnn}, a novel stochastic network, called G-network or queueing network with signals was introduced as a unifying model for neural and queuing networks. In contrast to traditional queueing networks where (positive) customers obey the specified service and routing disciplines determined by the network dynamics, there is another type of customers with the effect of signal that interact upon arrival at a queue with the queue or with the backlogged customers. 

G-Networks \cite{GN1} establish a versatile class of queueing networks with a computationally efficient product form solution, which had been proved to exist by using new techniques from the theory of fixed
point equation \cite{GN3}. In its simplest version, a signal arriving at a non empty queue forces a positive customer to leave the network immediately \cite{GN1}. Since their introduction, G-networks have been extensively studied covering several extensions such as triggered movement,
which redirect customers among the queues \cite{GN4}; catastrophes or batch service \cite{GN5}, adders \cite{GN6}; multiple classes of positive customers and signals \cite{gelenbe1998g}, state-dependent service disciplines \cite{four1,hende,chao}, tandem
networks \cite{gom1,har2}, deletion of a random amount of work \cite{box1,jain}, retrials \cite{art1,art2} (not exhaustive list). For a complete bibliography see \cite{gelre,do,cag}.

G-networks have been shown to be a diverse application tool to analyse and optimise the effects of dynamic load balancing in large scale networks \cite{Morfo} as well as in Gene Regulatory Networks \cite{G2007,KimG12}. A recent application of G-Networks is to the modelling of systems which operate with intermittent sources of energy, known as Energy Packet Networks \cite{EPN,Marin,fourn,Ceran2,dimit,zhang,NOLTA,kad,Kadi,omer,yin}. Other applications include their of approximating continuous and bounded real-valued functions \cite{Approx}, which serves as the foundation for learning algorithms \cite{gelrnn} and Deep Learning \cite{Yin2}. They have been also used for modelling natural neuronal networks \cite{Biosystems}, in image processing applications \cite{MRI,Cramer2}, and as a tool for predicting the toxicity of chemical compounds \cite{ICANN18}.

In the field of computer network performance, the RNN has been used to build distributed controllers for quality of service routing in packet networks \cite{CPN,Brun} and in the design of Software Defined Network controllers \cite{Francois1}. Real-time optimised task allocation algorithms in Cloud systems \cite{Wang2} have also been built and tested. Recent applications has addressed the use of the RNN to detect attacks on Internet of Things (IoT) gateways \cite{IoT}.
\paragraph{Stability and Delay analysis in Random Access Networks:}
The simple and decentralized nature of ALOHA protocol \cite{Abramson} made it very popular in multiple access communication systems. The ever increasing need for massive uncoordinated access has increased the interest on random access protocols \cite{Laya2014, METISComMag2014}, which remain an active research area with challenging open problems even for very simple networks \cite{AEUnion,TongSingProcMag2004}. 

In spite of its simple operation, stable throughput and delay analysis are quite challenging due to the high level of interaction among queues. This is due to the fact that each transmitted source interferes with other users with non-empty queues, i.e., the successful transmission probability is a function of the status of the other users. As a consequence, the departure rate of a queue can be computed only when we know the stationary distribution of the joint queue length process \cite{Rao_TIT1988}. This is the reason why the vast majority of previous works has focused on small-sized networks and only bounds or approximations are known for the networks with larger number of sources \cite{Tsybakov79,Rao_TIT1988,szpa1,LuoAE1999,NawareTong2005}. Delay analysis of random access networks is even more challenging both by mathematical and the application point of view. The high level of interaction among queues is the reason for the limited number of analytical results. For the sake of clarity we mention the works in \cite{NawareTong2005, szpa2, nain, BehrooziRaoTIT1992, GeorgiadisJSAC87} (not exhaustive list).

Recently, the authors have performed considerable contribution both in the investigation of stable throughput region and in the delay analysis by considering sophisticated \textit{queue-aware} transmission protocols in modern random access schemes \cite{dimpaptwc,dimpap}, in IoT networks \cite{ChenWoWMoM2016, ChenICCW2017, itc2018, ChenTWC2018} and in network-level cooperative wireless networks \cite{PappasTWC2015, adhoc}. 

\subsection{Contributions}
In this work, we consider a two-user slotted ALOHA multiple access channel with collisions, i.e., a transmission is successful if and only if a single user transmits. Each user has external bursty arrivals that are stored in its infinite-sized capacity queue and accesses the medium in a random access manner. The major contribution of this work relies on the fact that we consider for the first time in the related literature the concept of \textit{signals} in random access networks, which introduces the so-called \textit{Random Access G-Network}. 

In particular, we consider both \textit{negative} signals, which delete a packet from a user's queue, and signals that \textit{trigger} the instantaneous movement of packets among user queues. The concept of negative signals can model virus attacks in such networks or other possible malfunctions, whereas, triggering signals can be used to model load balancing schemes. To the best of our knowledge this variation of random access has not been reported so far. For such a network, we investigate stability conditions, the stable throughput region and the queueing delay. 
\subsubsection{Stability and Stable Throughput Analysis}
The presence of negative signals affects the stable throughput region, which now becomes a subset of the stability region, since packets can be dropped before being transmitted to the destination thus, these packets they do not contribute to the achievable throughput. Note that such property has never been reported in the literature of random access networks so far. 

When characterizing the stability and/or stable throughput in such a network, we have to cope with the problem of highly interaction among queues. In RAG network, the level of interaction is even higher due to the presence of signals that interact upon arrivals both with the queue and with the backlogged packets. We show that the stochastic dominant technique \cite{Rao_TIT1988} is still an efficient tool to bypass the problem of interaction among queues. Although stability conditions can be also derived by well known methods from two dimensional Markov chains \cite{fay,rw}, the importance of stochastic dominant technique is more apparent when we characterize the stable throughput region, where we show that it is a subset of the stability region.

\subsubsection{Queueing Analysis}
Queueing theoretic analysis of interacting queues is a quite challenging task. The increased level of interaction due to the presence of signals will further complicate the analysis. In this work, we present a compact mathematical analysis, and we provide exact expressions for the probability generating function (pgf) of the joint stationary queue length distribution of user queues by solving a non-homogeneous \textit{Riemann boundary value problem} \cite{ga}. Furthermore, we also provide a computationally efficient alternative way to obtain explicit bounds for basic performance metrics without calling for advanced mathematical concepts of boundary value problems.   

The rest of the paper is summarized as follows. In Section \ref{sec:model} we describe in detail the mathematical model, while in Section \ref{sec:SCSTR} we provide the stability and the stable throughput region. The fundamental functional equation along with some basic preliminary results regarding its investigation are presented in Section \ref{sec:funct}. In Section \ref{sec:bvp} we provide expressions for the pgf of the joint stationary queue length distribution at user queues in terms of a solution of a Riemann boundary value problem, while in Section \ref{sec:bounds} we provide exact bounds for the expected number of backlogged packets at each user queue, without using the theory of boundary value problems. Numerical results are given in Section \ref{sec:num} and show insights in system performance. Some conclusions and future directions are presented in Section \ref{sec:con}.

\section{The Mathematical Model}\label{sec:model}
We consider an ALOHA-type random access network (RAN) consisting of two users, say $U_{1}$, $U_{2}$, that communicate with a common destination node. Each user is equipped with infinite capacity buffers, in which it stores the arriving and backlogged packets. The time is slotted and packets have equal length. The transmission time of a packet corresponds to a single slot. 

At the beginning of a slot, user $U_{k}$, $k=1,2,$ if it is not-empty, transmits a packet with probability $\alpha_{k}$, or remains silent with probability $\bar{\alpha}_{k}=1-\alpha_{k}$. We consider a collision based channel model, and thus, if both users attempt to transmit simultaneously, a collision occurs and both transmissions fail. In such case, the packets have to be re-transmitted in a latter slot. 

Contrary to the traditional RAN, in this work we introduce a generalized RAN, called the Random Access G-Network (RAGN) by including the concept of signals. In particular, at the beginning of a slot, signals are generated in user $U_{k}$ with probability $s_{k}$. If a signal is generated, then with probability $l_{k}^{-}$ it deletes a packet from the buffer of $U_{k}$, while with probability $l_{k}^{+}=1-l_{k}^{-}$ it triggers the instantaneously movement of a packet from the buffer of $U_{k}$ to the buffer of the other user. A signal arriving in an empty buffer has no effect. In this work, we assume that if a signal and a packet transmission occur simultaneously, the signal occurs first. \textit{In case of a signal generation, a temporarily network malfunction occurs, and \textbf{both} users remain silent during the slot}. Moreover, packet arrivals are scheduled at the end of the slot, early departure late arrival model. Denote by $A_{k,n}$ the number of packets arriving in $U_{k}$, in the time interval $(n, n+1]$, with $\mathbb{E}(A_{n,k})=\lambda_{k}<\infty$, $k=1,2$. Denote the joint pgf of $A_{1,n}$ and $A_{2,n}$ by $H(x,y)=\mathbb{E}(x^{A_{1,n}}y^{A_{2,n}})$, $|x|\leq1$, $|y|\leq1$, $n\in\mathbb{N}$.

Let $Q_{k,n}$ be the queue length at the buffer of $U_{k}$ at the
beginning of time slot $n$. Under the above system model, the two-dimensional process $\mathbf{Q}_{n}=\{(Q_{1,n},Q_{2,n});n=0,1,\ldots\}$ is a Markov
chain. Clearly, under usual assumptions $\mathbf{Q}$ is irreducible and aperiodic. Then,
\begin{itemize}
\item If $Q_{1,n}=Q_{2,n}=0$, then $Q_{k,n+1}=A_{k,n}$, $k=1,2$.
\item If $Q_{1,n}=0$, $Q_{2,n}>0$,
\begin{displaymath}
\begin{array}{rl}
Q_{1,n+1}=&\left\{\begin{array}{ll}
A_{1,n},&w.p.\ \bar{s}_{2}+s_{2}l_{2}^{-},\\
A_{1,n}+1,&w.p.\ s_{2}l_{2}^{+},
\end{array}\right.\vspace{2mm}\\
Q_{2,n+1}=&\left\{\begin{array}{ll}
Q_{2,n}+A_{2,n},&w.p.\ \bar{s}_{2}\bar{\alpha}_{2},\\
Q_{2,n}-1+A_{2,n},&w.p.\ \bar{s}_{2}\alpha_{2}+s_{2},
\end{array}\right.
\end{array}
\end{displaymath}
\item If $Q_{1,n}>0$, $Q_{2,n}=0$,
\begin{displaymath}
\begin{array}{rl}
Q_{1,n+1}=&\left\{\begin{array}{ll}
Q_{1,n}+A_{1,n},&w.p.\ \bar{s}_{1}\bar{\alpha}_{1},\\
Q_{1,n}-1+A_{1,n},&w.p.\ \bar{s}_{1}\alpha_{1}+s_{1},
\end{array}\right.\vspace{2mm}\\
Q_{2,n+1}=&\left\{\begin{array}{ll}
A_{2,n},&w.p.\ \bar{s}_{1}+s_{1}l_{1}^{-},\\
A_{2,n}+1,&w.p.\ s_{1}l_{1}^{+},
\end{array}\right.
\end{array}
\end{displaymath}
\item If $Q_{1,n}>0$, $Q_{2,n}>0$,
\begin{displaymath}
\begin{array}{rl}
(Q_{1,n+1}Q_{2,n+1})=&\left\{\begin{array}{ll}
(Q_{1,n}+A_{1,n},Q_{2,n}+A_{2,n}),&w.p.\ \bar{s}_{1}\bar{s}_{2}(\bar{\alpha}_{1}\bar{\alpha}_{2}+\alpha_{1}\alpha_{2})+s_{1}s_{2}l_{1}^{+}l_{2}^{+},\\
(Q_{1,n}-1+A_{1,n},Q_{2,n}+A_{2,n}),&w.p.\ \bar{s}_{1}\bar{s}_{2}\alpha_{1}\bar{\alpha}_{2}+s_{1}\bar{s}_{2}l_{1}^{-}+s_{1}s_{2}l_{1}^{+}l_{2}^{-},\\
(Q_{1,n}-1+A_{1,n},Q_{2,n}+1+A_{2,n}),&w.p.\ s_{1}\bar{s}_{2}l_{1}^{+},\\
(Q_{1,n}+A_{1,n},Q_{2,n}-1+A_{2,n}),&w.p.\ \bar{s}_{1}\bar{s}_{2}\alpha_{2}\bar{\alpha}_{1}+s_{2}\bar{s}_{1}l_{2}^{-}+s_{1}s_{2}l_{2}^{+}l_{1}^{-},\\
(Q_{1,n}+1+A_{1,n},Q_{2,n}-1+A_{2,n}),&w.p.\ s_{2}\bar{s}_{1}l_{2}^{+},\\
(Q_{1,n}-1+A_{1,n},Q_{2,n}-1+A_{2,n}),&w.p.\ s_{1}s_{2}l_{1}^{1}l_{2}^{-}.
\end{array}\right.
\end{array}
\end{displaymath}
\end{itemize}

\section{Stability Conditions and Stable Throughput Region} \label{sec:SCSTR}

From the previous section, it is clear that for a packet located at the head of the queue, there are three options depending if a signal will be generated or not. When a signal is not generated, the packet it will be transmitted successfully to the destination if the node will attempt to transmit and collision will not happen. If a signal is generated, there are two options, either the packet will be dropped from the system or it will be transferred to the other queue. Thus, it is important to emphasize that the stability region is different than the stable throughput region. In fact, the stable throughput region is a subset of the stability region as we will see in this section.

The service probability for $U_i$ is denoted by $\mu_i$, $i=1,2$, which denotes the probability that a packet will be either transmitted successfully to the destination or it will be relocated to the other queue. The expression for $\mu_{1}$ is given by
\begin{equation} \label{eq:mu1}
\mu_{1}=\alpha_{1} \mathrm{Pr}(Q_2=0) \bar{s}_1 +\alpha_{1} \mathrm{Pr}(Q_2>0) \bar{s}_1 \bar{\alpha}_{2} \bar{s}_2+s_1l_{1}^{+}.
\end{equation}

Then we define the probability that a packet will be removed from a queue, either it will dropped, or transmitted successfully, or transferred to the other queue. For the users $U_k$, $k=1,2$, we denote this probability $m_k$. The expression for $m_1$ is given by

\begin{equation}
m_1= \mu_{1}+s_1l_{1}^{-}.
\end{equation}

which is equal to

\begin{equation} \label{eq:m1}
m_1= \alpha_{1} \bar{s}_1 \left[ \mathrm{Pr}(Q_2=0)+ \mathrm{Pr}(Q_2>0) \bar{\alpha}_{2}\bar{s}_2\right] +s_1.
\end{equation}

Similarly, we can write the expression for $m_2$.
Clearly, $m_1$ and $\mu_{1}$ depend on the state of the second queue and $m_2$ and $\mu_{2}$ on the state of the first queue. Thus, the queues are coupled. We will bypass this difficulty by applying the stochastic dominance technique to obtain the exact stability region and the stable throughput region.

We use the following definition of queue stability \cite{szpa,szpa1}:

\begin{definition}
Denote by $Q_i^t$ the length of queue $i$ at the beginning of time slot $t$. The queue is said to be \emph{stable} if
$\lim_{t \rightarrow \infty} {Pr}[Q_i^t < {x}] = F(x)$ and $\lim_{ {x} \rightarrow \infty} F(x) = 1$.
\end{definition}

Although we will not make explicit use of this definition we use its corollary consequence which is Loynes' theorem~\cite{Loynes} that states that if the arrival and service processes of a queue are strictly jointly stationary and the average arrival rate is less than the average service rate, then the queue is stable. If the average arrival rate is greater than the average service rate, then the queue is unstable and the value of $Q_i^t$ approaches infinity almost surely. The stability region of the system is defined as the set of arrival rate vectors $\boldsymbol{\lambda}=(\lambda_1, \lambda_2)$ for which the queues in the system are stable. In the previous definition, $\lambda_i$ denotes the total arrival rate at the queue $i$.

Before proceeding with the derivation of the stability region and the stable throughput region, we need to calculate the internal arrival probability, $\lambda_{j,i}$ for each queue $j=1,2$, $i$ stands for internal. We have that 

\begin{equation} \label{eq:l1igen}
\lambda_{1,i}=\mathrm{Pr}(Q_2>0) s_2 l_{2}^{+}.
\end{equation}

and

\begin{equation} \label{eq:l2igen}
\lambda_{2,i}=\mathrm{Pr}(Q_1>0) s_1 l_{1}^{+}.
\end{equation}

The following theorem provides the stability region for the considered system.

\begin{theorem}[Stability Region]\label{stabil}
The stability region $\mathcal{R}$ for a fixed transmission probability vector $\mathbf{\alpha}:=[\alpha_{1},\alpha_{2}]$ is given by $\mathcal{R}=\mathcal{R}_1 \cup \mathcal{R}_2$ where 
\begin{equation}
\begin{aligned} \label{stab_R1}
\mathcal{R}_1 = \left\{ (\lambda_{1},\lambda_{2}) :\lambda_{1}+\frac{(\lambda_{2}+s_1l_{1}^{+})(s_2l_{2}^{+}+\alpha_{1}\bar{s}_1(1-\bar{\alpha}_{2} \bar{s}_2))}{\alpha_{2} \bar{s}_2 \bar{\alpha}_1 \bar{s}_1 +s_2}<\alpha_{1} \bar{s}_1 + s_1,  \right. \\
\left.  \lambda_{2}+s_1 l_{1}^{+} <\alpha_{2} \bar{s}_2 \bar{\alpha}_{1}\bar{s}_1+s_2 \right\},
\end{aligned}
\end{equation}
\begin{equation}
\begin{aligned}\label{stab_R2}
\mathcal{R}_2 = \left\{ (\lambda_{1},\lambda_{2}) :\lambda_{2}+\frac{(\lambda_{1}+s_2l_{2}^{+})(s_1l_{1}^{+}+\alpha_{2}\bar{s}_2(1-\bar{\alpha}_{1} \bar{s}_1))}{\alpha_{1} \bar{s}_1 \bar{\alpha}_2 \bar{s}_2 +s_1}<\alpha_{2} \bar{s}_2 + s_2, \right. \\
\left.  \lambda_{1}+s_2 l_{2}^{+} <\alpha_{1} \bar{s}_1 \bar{\alpha}_{2}\bar{s}_2+s_1 \right\},
\end{aligned}
\end{equation}
\end{theorem}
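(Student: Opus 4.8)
The plan is to apply the stochastic dominance technique (following \cite{Rao_TIT1988}) to decouple the two interacting queues. The core difficulty, as the authors emphasize, is that the service probabilities $m_1, m_2$ in \eqref{eq:m1} depend on the occupancy state of the \emph{other} queue through the terms $\mathrm{Pr}(Q_2=0)$ and $\mathrm{Pr}(Q_2>0)$, and likewise the internal arrival rates $\lambda_{1,i}, \lambda_{2,i}$ in \eqref{eq:l1igen}--\eqref{eq:l2igen} are coupled in the same way. Loynes' theorem cannot be applied directly because neither queue sees a stationary service process in isolation. To bypass this, I would construct two \emph{dominant systems}: in the first, $U_1$ transmits dummy (fictitious) packets whenever its own queue is empty, so that $U_1$ always behaves as a non-empty interferer from the viewpoint of $U_2$; in the second, the roles are reversed. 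The two dominant systems will yield the two sub-regions $\mathcal{R}_1$ and $\mathcal{R}_2$, and their union gives $\mathcal{R}$.

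\textbf{First dominant system (yields $\mathcal{R}_2$).} Here $U_2$ never empties from $U_1$'s perspective, so $\mathrm{Pr}(Q_2>0)=1$ in the service expression for $U_1$. The service probability for $U_1$ then becomes constant, and by Loynes' theorem $Q_1$ is stable iff its total arrival rate is below this service rate. The total arrival rate at $U_1$ is the sum of the external rate $\lambda_1$ and the internal (triggered) rate $\lambda_{2,i}$ flowing in from $U_2$; but in this dominant system $Q_1$'s own behavior is what we solve first, yielding $\mathrm{Pr}(Q_1=0)=1-\rho_1$ where $\rho_1$ is the corresponding load. Substituting this marginal into the coupled expressions for $\mu_2, m_2$ and for the internal arrival rate $\lambda_{2,i}=\mathrm{Pr}(Q_1>0)s_1 l_1^+ = \rho_1 s_1 l_1^+$ converts the second queue's stability condition into an explicit inequality in $(\lambda_1,\lambda_2)$ alone. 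Carrying out this substitution and clearing denominators produces precisely the two inequalities in \eqref{stab_R2}: the second inequality $\lambda_1 + s_2 l_2^+ < \alpha_1\bar{s}_1\bar{\alpha}_2\bar{s}_2 + s_1$ is the stability of the dominant $U_1$, and the first is the stability of $U_2$ after eliminating the marginal of $U_1$. The derivation of $\mathcal{R}_1$ is symmetric, interchanging the indices $1 \leftrightarrow 2$.

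\textbf{The hard part} will be justifying that the stability region of the \emph{original} coupled system coincides with the union of the dominant-system regions, rather than merely being contained in or containing it. The dominant systems are constructed so that their queue-length processes pathwise dominate those of the original system (the dummy packets only increase interference and congestion), which immediately gives that stability of a dominant system implies stability of the original---hence $\mathcal{R}_1\cup\mathcal{R}_2$ is an inner bound. The reverse direction, the \emph{indistinguishability} argument, is the crux: one must show that whenever the dominant system is unstable, the original system is also unstable, using the fact that as long as the dominated queue is itself stable the two systems are statistically indistinguishable (the dummy packets are transmitted only finitely often once the real queue saturates). This coupling/indistinguishability step is standard in the literature but is where all the care resides; I would invoke it to upgrade the inner bound to an exact characterization.

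\textbf{Finally}, I would verify internal consistency: the expressions must reduce, when all signal parameters vanish ($s_1=s_2=0$), to the classical slotted-ALOHA stability region for two interacting queues, and the triggered-arrival terms $s_k l_k^+$ must appear additively on the arrival side while the deletion terms $s_k l_k^-$ (absorbed into $m_k$ but absent from $\mu_k$) widen the service side. This sanity check confirms that the internal arrivals $\lambda_{j,i}$ have been correctly folded into the effective load of each queue before Loynes' theorem is applied.
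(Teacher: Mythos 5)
Your proposal takes essentially the same route as the paper's own proof: construct the two dummy-packet dominant systems, apply Loynes' theorem to the saturated-side queue, substitute its empty-queue probability into the other queue's service rate and internal (triggered) arrival rate to obtain the two inequalities of each sub-region, and invoke the indistinguishability argument to upgrade sufficiency to an exact characterization. Apart from harmless notational slips (the triggered rate flowing into $U_1$ is $\lambda_{1,i}$, not $\lambda_{2,i}$, and your ``first'' dominant system, the one with $U_2$ saturated, is the paper's $R_2$ system rather than the one announced in your plan), the derivation matches the paper's.
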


\begin{proof}

To determine the stability region of our system we apply the stochastic dominance technique \cite{Rao_TIT1988}, i.e. we construct hypothetical dominant systems, in which a node transmits dummy packets for the packet queue that is empty, while for the non-empty queue transmits according to its traffic. Under this approach, we consider the $R_{1}$, and $R_{2}$-dominant systems. In the $R_{k}$ dominant system, whenever the queue $k$, $k=1,2$ empties, it continues transmitting a dummy packet. Thus, in $R_{1}$, node $1$ never empties, and hence, node $2$ sees a constant probability that a packet will be removed from its queue, while that probability for node $1$ depends on the state of node $2$, i.e., empty or not. We proceed with dominant system $R_1$. The probability $m_1$ of the first node is given by \eqref{eq:m1}. The $m_2$ is given by

\begin{equation} \label{eq:m2}
m_2= \alpha_{2} \bar{s}_2 \left[ \mathrm{Pr}(Q_1=0)+ \mathrm{Pr}(Q_1>0) \bar{\alpha}_{1}\bar{s}_1\right] +s_2.
\end{equation}

Since in $R_1$ queue $1$ never empties, $\mathrm{Pr}(Q_1>0)=1$, we have $m_2=\alpha_{2} \bar{s}_2 \bar{\alpha}_{1}\bar{s}_1+s_2$. We can obtain that the total arrival rate at $U_2$ is $\lambda_2+s_1 l_{1}^{+}$. By applying Loyne's criterion \cite{Loynes}, the second node is stable if and only if $\lambda_{2}+s_1 l_{1}^{+} < m_2$. The probability that the second node is empty and is given by 

\begin{equation} \label{eq:PrQ2empty}
\mathrm{Pr}(Q_2 = 0)=1-\frac{\lambda_{2}+s_1 l_{1}^{+}}{m_2}=1-\frac{\lambda_{2}+s_1 l_{1}^{+} }{\alpha_{2} \bar{s}_2 \bar{\alpha}_{1}\bar{s}_1+s_2}.
\end{equation}

Then, we can obtain the internal arrival probability at $U_1$ is given by

\begin{equation} \label{eq:l1R1}
\lambda_{1,i}=\frac{s_2 l_{2}^{+}(\lambda_{2}+s_1 l_{1}^{+})}{\alpha_{2} \bar{s}_2 \bar{\alpha}_{1}\bar{s}_1+s_2}.
\end{equation}

After replacing $\mathrm{Pr}(Q_2 = 0)$ into \eqref{eq:m1}, and applying Loynes criterion \cite{Loynes} we can obtain the stability condition for the first node. Then, we have the stability region $\mathcal{R}_1$ given by \eqref{stab_R1}. 
Similarly, we can obtain the stability region for the second dominant system $\mathcal{R}_2$, the proof is omitted. For a detailed treatment of dominant systems please refer to \cite{Rao_TIT1988}.

An important observation made in \cite{Rao_TIT1988} is that the stability conditions obtained by the stochastic dominance technique are not only sufficient but also necessary for the stability of the original system. The \emph{indistinguishability} argument \cite{Rao_TIT1988} applies here as well. Based on the construction of the dominant system, we can see that the queue sizes in the dominant system are always greater than those in the original system, provided they are both initialized to the same value and the arrivals are identical in both systems. Therefore, given $\lambda_{2}<m_{2}$, if for some $\lambda_{1}$, the queue at the first user is stable in the dominant system, then the corresponding queue in the original system must be stable. Conversely, if for some $\lambda_{1}$ in the dominant system, the queue at the first node saturates, then it will not transmit dummy packets, and as long as the first user has a packet to transmit, the behavior of the dominant system is identical to that of the original system since dummy packet transmissions are eliminated as we approach the stability boundary. Therefore, the original and the dominant system are indistinguishable at the boundary points.
\end{proof}

\begin{theorem}[Stable Throughput Region] \label{Thr:STR}
The stable throughput region $\mathcal{T}$ for a fixed transmission probability vector $\mathbf{\alpha}:=[\alpha_{1},\alpha_{2}]$ is given by $\mathcal{T}=\mathcal{T}_1 \cup \mathcal{T}_2$ where 
\begin{equation}
\begin{aligned} \label{stab_T1}
\mathcal{T}_1 = \left\{ (\lambda_{1},\lambda_{2}) :\lambda_{1}+\frac{(\lambda_{2}+s_1l_{1}^{+})(s_2l_{2}^{+}+\alpha_{1}\bar{s}_1(1-\bar{\alpha}_{2} \bar{s}_2))}{\alpha_{2} \bar{s}_2 \bar{\alpha}_1 \bar{s}_1 +s_2}<\alpha_{1} \bar{s}_1 + s_1l_{1}^{+}, \right. \\
\left. \lambda_{2}+s_1 l_{1}^{+} <\alpha_{2} \bar{s}_2 \bar{\alpha}_{1}\bar{s}_1+s_2l_{2}^{+} \right\},
\end{aligned}
\end{equation}
\begin{equation}
\begin{aligned}\label{stab_T2}
\mathcal{T}_2 = \left\{ (\lambda_{1},\lambda_{2}) :\lambda_{2}+\frac{(\lambda_{1}+s_2l_{2}^{+})(s_1l_{1}^{+}+\alpha_{2}\bar{s}_2(1-\bar{\alpha}_{1} \bar{s}_1))}{\alpha_{1} \bar{s}_1 \bar{\alpha}_2 \bar{s}_2 +s_1}<\alpha_{2} \bar{s}_2 + s_2l_{2}^{+}, \right. \\
\left. \lambda_{1}+s_2 l_{2}^{+} <\alpha_{1} \bar{s}_1 \bar{\alpha}_{2}\bar{s}_2+s_1l_{1}^{+} \right\},
\end{aligned}
\end{equation}
\end{theorem}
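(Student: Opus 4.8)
The plan is to replay the stochastic dominance argument from the proof of Theorem~\ref{stabil} almost verbatim, the only change being that the useful departure rate entering Loynes' criterion is now the service (throughput) probability $\mu_k$ rather than the total removal probability $m_k$, while $m_k$ must still be used wherever an idle probability is required. As before I would build the two dominant systems $R_1$ and $R_2$ in which the tagged node transmits dummy packets while empty; by symmetry it suffices to analyse $R_1$ and obtain $\mathcal{T}_2$ by interchanging the indices $1$ and $2$.

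In $R_1$ node~$1$ never empties, so $\mathrm{Pr}(Q_1>0)=1$ and node~$2$ sees a constant environment. The distinction from the stability analysis is that a packet contributes to the \emph{throughput} of a queue only when it is transmitted successfully or relocated by a triggering signal, and not when it is erased by a negative signal; the relevant rate for node~$2$ is therefore $\mu_2=\alpha_{2}\bar{s}_2\bar{\alpha}_{1}\bar{s}_1+s_2 l_{2}^{+}$ in place of $m_2$. Since its total input is still $\lambda_2+s_1 l_{1}^{+}$, the internal arrivals of \eqref{eq:l2igen} being triggered relocations evaluated at $\mathrm{Pr}(Q_1>0)=1$, Loynes' theorem gives the second inequality of \eqref{stab_T1}, namely $\lambda_{2}+s_1 l_{1}^{+}<\alpha_{2}\bar{s}_2\bar{\alpha}_{1}\bar{s}_1+s_2 l_{2}^{+}$.

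For node~$1$ I would next compute the idle probability of node~$2$ from its \emph{total} removal rate $m_2=\alpha_2\bar{s}_2\bar{\alpha}_1\bar{s}_1+s_2$ of \eqref{eq:m2}, because a packet leaves the buffer of $U_2$ whether it is served, relocated, or deleted; this reproduces $\mathrm{Pr}(Q_2=0)$ and the internal arrival rate $\lambda_{1,i}$ exactly as in \eqref{eq:PrQ2empty} and \eqref{eq:l1R1}. Inserting this idle probability into the throughput rate $\mu_1$ of \eqref{eq:mu1} and applying Loynes' criterion $\lambda_1+\lambda_{1,i}<\mu_1$ yields, after the same rearrangement as in Theorem~\ref{stabil}, the first inequality of \eqref{stab_T1}. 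The region $\mathcal{T}_2$ then follows from the symmetric treatment of $R_2$, and $\mathcal{T}=\mathcal{T}_1\cup\mathcal{T}_2$.

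The main obstacle is conceptual rather than computational: one must carefully separate the two rates and justify why each appears where it does — throughput is controlled by $\mu_k$ because deleted packets do not count, whereas the fraction of idle time, hence $\mathrm{Pr}(Q_k=0)$ and the triggered internal arrival rates, is controlled by $m_k$ because a deleted packet still empties the buffer. One should also verify consistency of the dominant construction: since $\mu_k\le m_k$, every defining inequality of $\mathcal{T}$ is strictly stronger than the corresponding one for $\mathcal{R}$, so throughout $\mathcal{T}$ both queues are genuinely stable, $\mathrm{Pr}(Q_2=0)\in(0,1)$ is well defined, and the indistinguishability argument of \cite{Rao_TIT1988} carries over unchanged; this simultaneously confirms the inclusion $\mathcal{T}\subseteq\mathcal{R}$.
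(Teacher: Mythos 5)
Your proposal is correct and follows essentially the same route as the paper's own proof: the $R_1$/$R_2$ dominant systems, Loynes' criterion applied with the service rate $\mu_2=\alpha_{2}\bar{s}_2\bar{\alpha}_{1}\bar{s}_1+s_2l_{2}^{+}$ for the second inequality, the idle probability $\mathrm{Pr}(Q_2=0)$ and internal rate $\lambda_{1,i}$ taken from the stability analysis (hence computed with $m_2$, as in \eqref{eq:PrQ2empty} and \eqref{eq:l1R1}), and substitution into \eqref{eq:mu1} for the first inequality. Your added remarks—explicitly justifying why $m_k$ governs idle time while $\mu_k$ governs throughput, and the consistency check $\mu_k\le m_k$ implying $\mathcal{T}\subseteq\mathcal{R}$—are sound elaborations of points the paper leaves implicit.
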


\begin{proof}
Following the same methodology of dominant systems in the previous theorem, we construct two hypothetical systems, $R_1$ and $R_2$. In $R_1$, $U_1$ transmits dummy packets when its queue is empty, thus $\mathrm{Pr}(Q_1>0)=1$, we have $\mu_2=\alpha_{2} \bar{s}_2 \bar{\alpha}_{1}\bar{s}_1+s_2l_{2}^{+}$. In the proof of Theorem \ref{stabil}, we obtain $\mathrm{Pr}(Q_2>0)$ and is given by \eqref{eq:PrQ2empty}, and the internal arrival probability for $U_1$ is given by \eqref{eq:l1R1}.
After replacing \eqref{eq:PrQ2empty} into \eqref{eq:mu1} and applying Loyne's theorem \cite{Loynes} we have that the stable throughput region obtained from $R_1$ is given by \eqref{stab_T1}. Similarly we obtain $\mathcal{T}_2$ which is given by \eqref{stab_T2}.
\end{proof}

\begin{remark}
Clearly, $\mathcal{T} \subset \mathcal{R}$ this can be easily observed by the expressions in Theorems \ref{stabil} and \ref{Thr:STR}. If $s_i>0$ and $l_{i}^{+}=1$, then $\mathcal{T} = \mathcal{R}$. If $s_i=0$ for all $i$ then $\mathcal{T} = \mathcal{R}$ and the stability region is the same with the two-user multiple access channel with collision reported in the literature in \cite{Rao_TIT1988}.
\end{remark}

\begin{remark}
In Appendix \ref{stab} we provide an alternative way to derive the stability condition based on general results regarding two-dimensional random walks \cite{rw}.
\end{remark}

\section{The Functional Equation and Preliminary Analysis}\label{sec:funct}
In the following, we apply the generating function approach and obtain a fundamental functional equation, which is the key element of our analysis. In order to proceed with the investigation of the functional equation we also need some crucial preliminary results that are also given below. To proceed, the evolution among queues implies
\begin{equation}
\begin{array}{rl}
\mathbb{E}(x^{Q_{1,n+1}}y^{Q_{2,n+1}})=&H(x,y)\left[\mathbb{E}(\mathbf{1}_{\{Q_{1,n}=Q_{2,n}=0}\})\right.\vspace{2mm}\\
&\left.+[s_{1}(\frac{l_{1}^{-}}{x}+\frac{l_{1}^{+}y}{x})+\bar{s}_{1}(\bar{\alpha}_{1}+\frac{\alpha_{1}}{x})]\mathbb{E}(x^{Q_{1,n}}\mathbf{1}_{\{Q_{1,n}>0,Q_{2,n}=0}\})\right.\vspace{2mm}\\
&\left.+[s_{2}(\frac{l_{2}^{-}}{y}+\frac{l_{2}^{+}x}{y})+\bar{s}_{2}(\bar{\alpha}_{2}+\frac{\alpha_{2}}{y})]\mathbb{E}(y^{Q_{2,n}}\mathbf{1}_{\{Q_{1,n}=0,Q_{2,n}>0}\})\right.\vspace{2mm}\\
&\left.+\left(s_{1}\bar{s}_{2}(\frac{l_{1}^{-}}{x}+\frac{l_{1}^{+}y}{x})+\bar{s}_{1}s_{2}(\frac{l_{2}^{-}}{y}+\frac{l_{2}^{+}x}{y})+s_{1}s_{2}(\frac{l_{1}^{-}l_{2}^{-}}{xy}+\frac{l_{1}^{-}l_{2}^{+}}{y}+\frac{l_{1}^{+}l_{2}^{-}}{x}+l_{1}^{+}l_{2}^{+})\right.\right.\vspace{2mm}\\
&\left.\left.+\bar{s}_{1}\bar{s}_{2}(\bar{\alpha}_{1}\bar{\alpha}_{2}+\alpha_{1}\alpha_{2}+\frac{\alpha_{1}\bar{\alpha}_{2}}{x}+\frac{\alpha_{2}\bar{\alpha}_{1}}{y})\right)\mathbb{E}(x^{Q_{1,n}}y^{Q_{2,n}}\mathbf{1}_{\{Q_{1,n}>0,Q_{2,n}>0}\})\right].
\end{array}
\end{equation}

Assuming that the system is stable (see Section \ref{sec:SCSTR}), letting
\begin{displaymath}
\Pi(x,y)=\lim_{n\to\infty}\mathbb{E}(x^{Q_{1,n}}y^{Q_{2,n}}),\,|x|\leq1,|y|\leq1,
\end{displaymath}
and having in mind that
\begin{displaymath}
\begin{array}{rl}
\lim_{n\to\infty}\mathbb{E}(\mathbf{1}_{\{Q_{1,n}=Q_{2,n}=0}\})=&\Pi(0,0),\\
\lim_{n\to\infty}\mathbb{E}(x^{Q_{1,n}}\mathbf{1}_{\{Q_{1,n}>0,Q_{2,n}=0}\})=&\Pi(x,0)-\Pi(0,0),\\
\lim_{n\to\infty}\mathbb{E}(y^{Q_{2,n}}\mathbf{1}_{\{Q_{1,n}=0,Q_{2,n}>0}\})=&\Pi(0,y)-\Pi(0,0),\\
\lim_{n\to\infty}\mathbb{E}(x^{Q_{1,n}}y^{Q_{2,n}}\mathbf{1}_{\{Q_{1,n}>0,Q_{2,n}>0}\})=&\Pi(x,y)-\Pi(x,0)-\Pi(0,y)+\Pi(0,0),
\end{array}
\end{displaymath}

we obtain the following functional equation:
\begin{equation}
\begin{array}{rl}
\Pi(x,y)[xy-\phi_{3}(x,y)]=[y\phi_{1}(x,y)-\phi_{3}(x,y)][\Pi(x,0)-\Pi(0,0)]\vspace{2mm}\\+[x\phi_{2}(x,y)-\phi_{3}(x,y)][\Pi(0,y)-\Pi(0,0)]+[xyH(x,y)-\phi_{3}(x,y)]\Pi(0,0),\end{array}
\label{fun}
\end{equation}
where
\begin{displaymath}
\begin{array}{rl}
\phi_{3}(x,y)=&H(x,y)[\bar{s}_{1}\bar{s}_{2}(xy(\bar{\alpha}_{1}\bar{\alpha}_{2}+\alpha_{1}\alpha_{2})+\alpha_{1}\bar{\alpha}_{2}y+\bar{\alpha}_{1}\alpha_{2}x)+s_{1}\bar{s}_{2}y(l_{1}^{-}+l_{1}^{+}y)\\
&+s_{2}\bar{s}_{1}x(l_{2}^{-}+l_{2}^{+}x)+s_{1}s_{2}(l_{1}^{-}l_{2}^{-}+l_{1}^{+}l_{2}^{-}y+l_{1}^{-}l_{2}^{+}x+l_{1}^{+}l_{2}^{+}xy)],\\
\phi_{1}(x,y)=&H(x,y)[\bar{s}_{1}(\bar{\alpha}_{1}x+\alpha_{1})+s_{1}(l_{1}^{-}+l_{1}^{+}y)],\\
\phi_{2}(x,y)=&H(x,y)[\bar{s}_{2}(\bar{\alpha}_{2}y+\alpha_{2})+s_{2}(l_{2}^{-}+l_{2}^{+}x)].
\end{array}
\end{displaymath}
The function $Z(x,y):=xy-\phi_{3}(x,y)$, $|x|\leq 1$, $|y|\leq 1$ is called the \textit{kernel} of the functional equation (\ref{fun}), and plays a crucial role in its solution procedure.

The definition of $\Pi(x,y)$, implies that for fixed $y$, with $|y|\leq 1$, $\Pi(x,y)$ is regular in $x$, with $|x|<1$, continuous in $x$, with $|x|\leq 1$, and similarly with $x$, $y$ interchanged, and the coefficients of $x^{i}y^{j}$, $(i,j)\in\mathbb{N}_{0}\times\mathbb{N}_{0}=\{0,1,\ldots\}\times\{0,1,\ldots\},$ in the series expansion of $\Pi(x,y)$ are non-negative, with $\Pi(1,1)=1$. Then, it also follows that $\Pi(p,0)$, $\Pi(0,p)$ are both regular in $p$, with $|p|<1$, continuous in $p$, with $|p|\leq 1$, and the coefficients in the series expansion of $\Pi(p,0)$ are all non-negative. Similar result holds for $\Pi(0,p)$.

Note that some interesting relations can be directly derived by the functional equation (\ref{fun}). More precisely, setting $y=1$, rearrange its terms and then taking the limit $x\to 1$, and vice versa we come up with the following ``conservation of flow relations":
\begin{equation}
\begin{array}{rl}
\lambda_{1}+s_{2}l_{2}^{+}(1-\Pi(1,0))=&(s_{1}+\bar{s}_{1}\bar{s}_{2}\alpha_{1}\bar{\alpha}_{2})[1-\Pi(1,0)-\Pi(0,1)+\Pi(0,0)]\\
&+(s_{1}+\bar{s}_{1}\alpha_{1})[\Pi(1,0)-\Pi(0,0)],\\
\lambda_{2}+s_{1}l_{1}^{+}(1-\Pi(0,1))=&(s_{2}+\bar{s}_{1}\bar{s}_{2}\alpha_{2}\bar{\alpha}_{1})[1-\Pi(1,0)-\Pi(0,1)+\Pi(0,0)]\\
&+(s_{2}+\bar{s}_{2}\alpha_{2})[\Pi(0,1)-\Pi(0,0)].
\end{array}
\label{con}
\end{equation}

Note that (\ref{con}) has a clear probabilistic interpretation. In particular, they equate the flow of packets into the buffer of $U_{k}$, with the flow of jobs out of the buffer of $U_{k}$, $k=1,2$, respectively. Just for the sake of clarity, note that the left hand side in the first of (\ref{con}) is composed of the mean number of external arrivals per slot in the buffer of $U_{1}$ (i.e. $\lambda_{1}$), while the second term corresponds to the arrivals per slot due to the presence of signals in $U_{2}$ that trigger the instantaneous transfer of a packet in $U_{1}$. Note that such an arrival may occur only in case $U_{2}$ is not empty, i.e., with probability $1-\Pi(1,0)$, which is the fraction of time user $U_{2}$ is not empty. 
 
\subsection{On the kernel $Z(x,y)$}
To proceed with the analysis of functional equations of type (\ref{fun}), it is required to focus on the investigation of \textit{suitable} representations of the zero-tuples of its kernel $Z(x,y)$. For a complete discussion the interested reader is referred to \cite{coh1,coh2,fay1}\footnote{Note that in case of Bernoulli arrivals, $Z(x,y)$ is a quadratic polynomial with respect to $x$, $y$.}

Clearly, the definition of $Z(x,y)$, implies that for fixed $y$, with $|y|\leq 1$, $Z(x,y)$ is regular in $x$, with $|x|<1$, continuous in $x$, with $|x|\leq 1$, and similarly with $x$, $y$ interchanged. In the following we assume \textit{symmetrical} characterization of the zero tuples $(\widehat{x},\widehat{y})$ of the kernel \cite{rw,coh1}, by letting
\begin{displaymath}
\widehat{x}=gr,\,\widehat{y}=gr^{-1},\,|r|=1, |g|\leq1,
\end{displaymath}
and thus,
\begin{displaymath}
Z(\widehat{x},\widehat{y})=g^{2}-\phi_{3}(gr,gr^{-1})=0.
\end{displaymath}
Then, for $|r|=1$, $r\neq1$, $|g|=1$,
\begin{displaymath}
\begin{array}{l}
|\phi_{3}(gr,gr^{-1})|\leq|\bar{s}^{2}(g^{2}(\bar{\alpha}^{2}+\alpha^{2})+\alpha\bar{\alpha}g(r+r^{-1}))+s\bar{s}gr^{-1}(l^{-}+l^{+}gr^{-1})\\
+s\bar{s}gr(l^{-}+l^{+}gr)+s^{2}((l^{-})^{2}+l^{+}l^{-}g(r+r^{-1})+(l^{+})^{2}g^{2})|<1=|g|^{2}.
\end{array}
\end{displaymath}
Since $\phi_{3}(gr,gr^{-1})$ is regular in $g$ for $|g| < 1$, with fixed $r$, $|r| = 1$, and continuous for $|g| \leq 1$, Rouch\'e's theorem implies that $g^{2}-\phi_{3}(gr,gr^{-1})=0$ has exactly two roots in $|g|\leq1$. According to Theorem 2.1, pp. 65-66, \cite{rw}, $g(r)$ is such that $g(r)\to 1$, as $r\to 1$, $|r|=1$. Put,
\begin{displaymath}
\begin{array}{c}
\mathcal{S}_{1}=\{x:x=g(r)r,|r|=1\},\,\mathcal{S}_{2}=\{y:y=g(r)r^{-1},|r|=1\}.
\end{array}
\end{displaymath}

In order to proceed, we must show that the contours $\mathcal{S}_{1}$, $\mathcal{S}_{2}$ are both simple and smooth. To show this in the general case, we need some extra assumptions (see \cite{coh1}, Sections II.3.1-II.3.3, pp. 151-163) that will further complicate the analysis and worse the readability of the paper. 

With that in mind, we will focus on the \textit{symmetrical} case (see next Section). For such a case Theorem 2.1, pp. 65-66 in \cite{rw} implies that:
\begin{enumerate}
\item $\mathcal{S}_{1}$, $\mathcal{S}_{2}$ are both simple and smooth.
\item $x=0\in \mathcal{S}_{1}^{+}$, $x=\infty\in\mathcal{S}_{1}^{-}$, $y=0\in\mathcal{S}_{2}^{+}$, $y=\infty\in\mathcal{S}_{2}^{-}$, where $\mathcal{S}_{k}^{+}$ (resp. $\mathcal{S}_{k}^{-}$) denotes the interior (resp. the exterior) of $\mathcal{S}_{k}$, $k=1,2$, and $0<g(r)\leq1$, $|r|=1$.
\item $x\in\mathcal{S}_{1}$ (resp. $y\in\mathcal{S}_{2}$), then $\bar{x}\in\mathcal{S}_{1}$ (resp. $\bar{y}\in\mathcal{S}_{2}$), where $\bar{h}$ denotes the complex conjugate of $h$.
\item the set $(\widehat{x},\widehat{y})\in\mathcal{S}_{1}\times\mathcal{S}_{2}$ of the zero tuples of the kernel $Z(x,y)=0$, and when $\widehat{x}$ traverses $\mathcal{S}_{1}$ counterclockwise, $\widehat{y}$, traverses $\mathcal{S}_{2}$ clockwise.
\end{enumerate}
Therefore, the solution of (\ref{fun}) is now reduced to the construction of functions $\Pi(p,0)$, $\Pi(0,p)$, both regular in $|p|<1$, which satisfy
\begin{equation}
\begin{array}{l}
\widehat{y}[\widehat{x}-\phi_{1}(\widehat{x},\widehat{y})][\Pi(\widehat{x},0)-\Pi(0,0)]+\widehat{x}[\widehat{y}-\phi_{2}(\widehat{x},\widehat{y})][\Pi(0,\widehat{y})-\Pi(0,0)]=\Pi(0,0)[H(\widehat{x},\widehat{y})-1],\end{array}
\label{fa}
\end{equation}
for all $(\widehat{x},\widehat{y})\in\mathcal{S}_{1}\times\mathcal{S}_{2}$. This problem will be transformed in the next section into a \textit{Riemann boundary value problem}. 
\section{The Symmetrical System: Exact Analysis}\label{sec:bvp}
As for the symmetrical system we assume $s_{k}=s$, $\lambda_{k}=\lambda$, $\alpha_{k}=\alpha$, $l_{k}^{+}=l^{+}$, $l_{k}^{-}=l$, $k=1,2$. Note that in such a case, $\phi_{3}(x,y)=\phi_{3}(y,x)$, and $\phi_{1}(x,y)=\phi_{2}(y,x)$. Moreover, for the stability conditions (see Theorem \ref{stabil} or Theorem \ref{stab1}) the following conditions should hold:
\begin{equation}
\begin{array}{rl}
\lambda+sl^{+}<&s+\bar{s}^{2}\alpha\bar{\alpha},\\
\lambda+sl^{+}\frac{\lambda+sl^{+}}{s+\bar{s}^{2}\alpha\bar{\alpha}}<&(s+\bar{s}\alpha)(1-\frac{\lambda+sl^{+}}{s+\bar{s}^{2}\alpha\bar{\alpha}})+\frac{\lambda+sl^{+}}{s+\bar{s}^{2}\alpha\bar{\alpha}}(s+\bar{s}^{2}\alpha\bar{\alpha}).
\end{array}
\end{equation}
Set,
\begin{displaymath}
\mathcal{C}:=\{z:|z|=1\},\,\mathcal{C}^{+}:=\{z:|z|<1\},\,\mathcal{C}^{-}:=\{z:|z|>1\}.
\end{displaymath}
We proceed as in \cite{coh1}, sections II.2.4, II.2.13, II.2.16; see also Theorem 3.2.1 in \cite{rw} or Theorem 1.1 in \cite{coh2}. In particular, for the contours $\mathcal{S}_{1}$, $\mathcal{S}_{2}$, and due to the symmetry of our system the following hold:
\begin{enumerate}
\item $z=0\in\mathcal{C}^{+}$, $z=1\in\mathcal{C}$, $z=\infty\in\mathcal{C}^{-}$,
\item and functions
\begin{displaymath}
x=x(z):\mathcal{C}^{+}\cup\mathcal{C}\to\mathcal{S}_{1}^{+}\cup\mathcal{S}_{1},\,\,y=y(z):\mathcal{C}^{-}\cup\mathcal{C}\to\mathcal{S}_{2}^{+}\cup\mathcal{S}_{2},
\end{displaymath} 
where,
\begin{enumerate}
\item $x(0)=0$, $x(1)=1=y(1)$, $y(\infty)=0$,
\item $x(z):\mathcal{C}^{+}\to\mathcal{S}_{1}^{+}$ is regular and univalent for $z\in \mathcal{C}^{+}$.
\item $y(z):\mathcal{C}^{-}\to\mathcal{S}_{2}^{+}$ is regular and univalent for $z\in \mathcal{C}^{-}$.
\end{enumerate}
\end{enumerate} 

Then, theorem 3.2.3, p. 123 in \cite{rw} states that $x(z)$, $z\in\mathcal{C}\cup\mathcal{C}^{+}$, and $y(z)$, $z\in\mathcal{C}\cup\mathcal{C}^{-}$, are such that
\begin{equation}
\begin{array}{rl}
x(z)=&\left\{\begin{array}{ll}
ze^{\frac{1}{2\pi i}\int_{|\zeta|=1}log\{g(e^{i\lambda(\zeta)})\}\{\frac{\zeta+z}{\zeta-z}-\frac{\zeta+1}{\zeta-1}\}\frac{d\zeta}{\zeta}},&z\in\mathcal{C}^{+},\\
g(e^{i\lambda(z)})e^{i\lambda(z)},&z\in\mathcal{C},
\end{array}\right.\vspace{2mm}\\
y(z)=&\left\{\begin{array}{ll}
z^{-1}e^{-\frac{1}{2\pi i}\int_{|\zeta|=1}log\{g(e^{i\lambda(\zeta)})\}\{\frac{\zeta+z}{\zeta-z}-\frac{\zeta+1}{\zeta-1}\}\frac{d\zeta}{\zeta}},&z\in\mathcal{C}^{-},\\
g(e^{i\lambda(z)})e^{-i\lambda(z)},&z\in\mathcal{C},
\end{array}\right.
\end{array}
\label{conf}
\end{equation}
where $\lambda(z)\in[0,2\pi)$, $z\in\mathcal{C}$, $\lambda(1)=0$, be the unique solution of
\begin{equation}
e^{i\lambda(z)}=ze^{\frac{1}{2\pi i}\int_{|\zeta|=1}log\{g(e^{i\lambda(\zeta)})\}\{\frac{\zeta+z}{\zeta-z}-\frac{\zeta+1}{\zeta-1}\}\frac{d\zeta}{\zeta}},\,z\in\mathcal{C}.
\label{conf1}
\end{equation}

For such a case $x(z)$ (resp. $y(z)$) represents a conformal mapping from $\mathcal{C}^{+}$ (resp. $\mathcal{C}^{-}$) onto $\mathcal{S}_{1}^{+}$ (resp. $\mathcal{S}_{2}^{+}$), while for every $z$, with $z\in\mathcal{C}$, $(\widehat{x},\widehat{y})=(x(z),y(z))$ is a zero tuple of the kernel $Z(x,y)$.

Since for $z\in\mathcal{C}$, $(\widehat{x},\widehat{y})=(x(z),y(z))$ is such that $x(z)y(z)=\phi_{3}(x(z),y(z))$, (\ref{fa}) implies after some algebra that
\begin{equation}
\Phi_{1}(z)=\Phi_{2}(z)G(z)+S(z),
\label{fq}
\end{equation}
where,
\begin{displaymath}
\begin{array}{rl}
\Phi_{1}(z)=&\frac{\Pi(x(z),0)-\Pi(0,0)}{\Pi(0,0)},\vspace{2mm}\\
\Phi_{2}(z)=&\frac{\Pi(0,y(z))-\Pi(0,0)}{\Pi(0,0)},\vspace{2mm}\\
G(z)=&\frac{x(z)[y(z)-\phi_{2}(x(z),y(z))]}{y(z)[\phi_{1}(x(z),y(z))-x(z)]},\vspace{2mm}\\
S(z)=&\frac{x(z)[1-H(x(z),y(z))]}{\phi_{1}(x(z),y(z))-x(z)}.
\end{array}
\end{displaymath}
Note that due to the regularity of $x(z)$, $z\in\mathcal{C}^{+}$, and of $y(z)$, $z\in\mathcal{C}^{-}$,
\begin{enumerate}
\item $\Phi_{1}(z)$, $z\in\mathcal{C}^{+}\cup\mathcal{C}$, is regular for $z\in\mathcal{C}^{+}$, and continuous for $z\in\mathcal{C}^{+}\cup\mathcal{C}$.
\item $\Phi_{2}(z)$, $z\in\mathcal{C}^{-}\cup\mathcal{C}$, is regular for $z\in\mathcal{C}^{-}$, and continuous for $z\in\mathcal{C}^{-}\cup\mathcal{C}$.
\end{enumerate}
Note also that $\Phi_{1}(z)$ is well defined due to the fact that $|x(z)|\leq 1$, $z\in\mathcal{C}$, so that $|x(z)|< 1$, $z\in\mathcal{C}^{+}$ due to the regularity of $x(z)$, $z\in\mathcal{C}^{+}$ and the maximum modulus theorem \cite{neh} (similar result holds for $\Phi_{2}(z)$), and $\Phi_{1}(0)=\lim_{|z|\to\infty}\Phi_{2}(z)=0$. 

From the above discussion, $\Phi_{1}(z)$, $\Phi_{2}(z)$, can be obtained as a solution of a non-homogeneous Riemann boundary value problem, with boundary condition given by (\ref{fq}). Its solution is solely based on the \textit{index} $\chi$ of $G(z)$ defined as
\begin{displaymath}
\begin{array}{rl}
\chi\equiv &ind_{|z|=1}G(z):=\frac{1}{2\pi i}\int_{|z|=1}d\log\{G(z)\}\\
=&ind_{|z|=1}x(z)-ind_{|z|=1}y(z)\\
&+ind_{|z|=1}[y(z)-\phi_{2}(x(z),y(z))]-ind_{|z|=1}[x(z)-\phi_{1}(x(z),y(z))].
\end{array}
\end{displaymath}

Note that $0<\phi_{k}(x(z),y(z))\leq 1$, $k=1,2$, while $G(z)$, $S(z)$ posses a continuous derivative along $\mathcal{C}$ and consequently, they satisfy the Ho1der condition on $\mathcal{C}$. Since $\mathcal{S}_{k}$, $k=1,2,$ are simple contours and $x(z)$ (resp. $y(z)$) traverses $\mathcal{S}_{1}$ counterclockwise (resp $\mathcal{S}_{2}$ clockwise), we have $ind_{|z|=1}x(z)=1$, $ind_{|z|=1}y(z)=-1$. Moreover,
\begin{displaymath}
\begin{array}{rl}
ind_{|z|=1}[y(z)-\phi_{2}(x(z),y(z))]=&-\frac{1}{2},\\
ind_{|z|=1}[x(z)-\phi_{1}(x(z),y(z))]=&\frac{1}{2}.
\end{array}
\end{displaymath}  
Therefore, $\chi=1$, and the solution of the Riemann boundary value problem is given by,
\begin{equation}
\begin{array}{rl}
\Phi_{1}(z)=&\left\{\begin{array}{ll}
e^{\Gamma(z)}[\Psi(z)+c_{1}z+c_{0}],\,z\in\mathcal{C}^{+},\\
e^{\Gamma^{+}(z)}[\Psi^{+}(z)+c_{1}z+c_{0}],\,z\in\mathcal{C},\\
\end{array}\right.\vspace{2mm}\\
\Phi_{2}(z)=&\left\{\begin{array}{ll}
z^{-1}e^{\Gamma(z)}[\Psi(z)+c_{1}z+c_{0}],\,z\in\mathcal{C}^{-},\\
z^{-1}e^{\Gamma^{-}(z)}[\Psi^{-}(z)+c_{1}z+c_{0}],\,z\in\mathcal{C},\\
\end{array}\right.
\end{array}
\label{sol}
\end{equation}
where $c_{0}$, $c_{1}$, are constants to be determined by using the fact that $\Phi_{1}(0)=\lim_{|z|\to\infty}\Phi_{2}(z)=0$, which in turn yields
\begin{displaymath}
e^{\Gamma(0)}[\Psi(0)+c_{0}]=0,\,\,\,c_{1}=0,
\end{displaymath}
and 
\begin{displaymath}
\begin{array}{rl}
\Gamma(z)=&\frac{1}{2\pi i}\int_{\zeta\in\mathcal{C}}\log\{\zeta^{-1}G(z)\}\frac{d\zeta}{\zeta-z},\,z\in\mathcal{C}\cup\mathcal{C}^{+}\cup\mathcal{C}^{-},\\
\Psi(z)=&\frac{1}{2\pi i}\int_{\zeta\in\mathcal{C}}S(\zeta)e^{-\Gamma^{+}(\zeta)}\frac{d\zeta}{\zeta-z},\,z\in\mathcal{C}\cup\mathcal{C}^{+}\cup\mathcal{C}^{-},
\end{array}
\end{displaymath}
with
\begin{displaymath}
\begin{array}{lr}
\Gamma^{+}(z_{0})=\lim_{z\in\mathcal{C}^{+},z\to z_{0}}\Gamma(z),&\Psi^{+}(z_{0})=\lim_{z\in\mathcal{C}^{+},z\to z_{0}}\Psi(z),\\
\Gamma^{-}(z_{0})=\lim_{z\in\mathcal{C}^{-},z\to z_{0}}\Gamma(z),&\Psi^{-}(z_{0})=\lim_{z\in\mathcal{C}^{-},z\to z_{0}}\Psi(z).
\end{array}
\end{displaymath}

In view of (\ref{sol}), we are able to obtain $\Pi(x(z),0)$, $\Pi(0,y(z))$, as a function of $\Pi(0,0)$. $\Pi(0,0)$ will be obtained by using: $i)$ the normalizing condition $\Pi(1,1)=1$, $ii)$ the solution (\ref{sol}), and $iii)$ the symmetry of the system, which implies that $\Pi(1,0)=\Pi(0,1)$. In particular, setting in (\ref{fun}) $y=1$ and then letting $x\to1$, yields
\begin{equation}
\Pi(1,0)[\bar{s}\alpha-2\bar{s}^{2}a\bar{\alpha}-sl^{-}]+\Pi(0,0)[\bar{s}^{2}\alpha\bar{\alpha}-\bar{s}\alpha]=\lambda+sl^{+}-(s+\bar{s}^{2}\alpha\bar{\alpha}).
\label{s1}
\end{equation}

Using (\ref{sol}), and (\ref{s1}) we obtain $\Pi(0,0)$. Thus, it follows that
$\Pi(x(z),0)$, $z\in\mathcal{C}\cup\mathcal{C}^{+}$ and $\Pi(0,y(z))$, $z\in\mathcal{C}\cup\mathcal{C}^{-}$ are known by (\ref{sol}). Moreover, since
both conformal mappings $x(z)$, $z\in\mathcal{C}\cup\mathcal{C}^{+}$ and $y(z)$, $z\in\mathcal{C}\cup\mathcal{C}^{-}$ unique inverse, $\Pi(x,0)$, $x\in\mathcal{S}_{1}\times\mathcal{S}_{1}^{+}$, and $\Pi(0,y)$, $y\in\mathcal{S}_{2}\times\mathcal{S}_{2}^{+}$ are also known. As a consequence, $\Pi(x,y)$, is now derived by (\ref{fun}).

\subsection{Performance metrics and numerical issues}
Since $\Pi(x,y)$ is obtained in terms of a solution of a Riemann boundary value problem, we are able to derive several performance metrics, such as the expected queue length, and the expected queueing delay in each user buffer. However, since $\Pi(x,y)$ is given in terms of contour integrals, i.e., (\ref{sol}), and the derivation of these integrals require also the numerical derivation of conformal mappings $x(z)$, $y(z)$, several computation problems arise.

In the following, we provide expressions for the expected number of packets at user queues. We focus only on user $U_{1}$. Similar expressions can be derived for user $U_{2}$. Denote by $\Pi_{1}(x, y)$, $\Pi_{2}(x, y)$ the
derivatives of $\Pi(x, y)$ with respect to $x$ and $y$ respectively. Note also that symmetry implies that $\Pi(1,0)=\Pi_{2}(0,1)$, $\Pi_{1}(1,0)=\Pi_{2}(0,1)$, $\Pi_{1}(1,1)=\Pi_{2}(1,1)$. Substituting $y=1$ in (\ref{fun}) and taking the derivative with respect to $x$ at point $x=1$ yields,
\begin{equation}
\begin{array}{rl}
\mathbf{L}_{1}:=\Pi_{1}(1,1)=&\left(\frac{\bar{s}^{2}\alpha\bar{\alpha}-\bar{s}\alpha-sl^{+}}{sl^{-}+\bar{s}^{2}\alpha\bar{\alpha}}\right)\Pi_{1}(1, 0)+W_{1}\Pi(1,0)+W_{0}\Pi(0,0),
\end{array}
\label{per}
\end{equation}
where
\begin{displaymath}
\begin{array}{rl}
W_{1}=&\frac{\lambda(2\bar{s}^{2}\alpha\bar{\alpha}-\bar{s}\alpha-sl^{-})+sl^{+}(\bar{s}^{2}\alpha\bar{\alpha}+sl^{-}-\bar{s}(s+\bar{s}\alpha))}{(sl^{-}+\bar{s}^{2}\alpha\bar{\alpha})^{2}},\vspace{2mm}\\
W_{0}=&\frac{\lambda(\bar{s}\alpha-\bar{s}^{2}\alpha\bar{\alpha})-sl^{+}(sl^{-}+\bar{s}^{2}\alpha\bar{\alpha})-\bar{s}(s+\bar{s}\alpha)}{(sl^{-}+\bar{s}^{2}\alpha\bar{\alpha})^{2}}.
\end{array}
\end{displaymath}

In view of (\ref{per}) we realize that we only need to obtain $\Pi_{1}(1,0)$. Note that the derivation of $\Pi(x,0)$ in terms of a solution of the non-homogeneous Riemann boundary value problem is based on the
properties of the conformal mappings $x:\mathcal{C}^{+}\to\mathcal{S}_{1}^{+}$ and $y:\mathcal{C}^{-}\to\mathcal{S}_{2}^{+}$ derived in (\ref{conf}). Indeed, the properties of the conformal mappings imply that the inverse of these mappings do exist. Let $z=w_{1}(x)$, $z=w_{2}(y)$ the inverse mappings of $x(.)$ and $y(.)$ respectively. Then, the first in (\ref{sol}) yields
\begin{displaymath}
\Pi(x,0)=\Pi(0,0)\times\left[1+\left\{\begin{array}{ll}
e^{\Gamma(w_{1}(x))}[\Psi(w_{1}(x))+c_{1}w_{1}(x)+c_{0}],\,x\in\mathcal{S}_{1}^{+},\\
e^{\Gamma^{+}(w_{1}(x))}[\Psi^{+}(w_{1}(x))+c_{1}w_{1}(x)+c_{0}],\,x\in\mathcal{S}_{1},\\
\end{array}\right.\right],
\end{displaymath}
When $x=1\in\mathcal{S}_{1}$\footnote{If $x=1\in\mathcal{S}_{1}^{-}$, then we need to obtain analytic continuation of the function in the right hand side of (\ref{sol}) following the lines in \cite{nau}. Note that such a case would result in further numerical difficulties.},
\begin{equation}
\begin{array}{rl}
\Pi_{1}(1,0)=&\Pi(0,0)\lim_{x\in S_{1}^{+},x\to 1}\frac{d}{dx}\{e^{\Gamma(w_{1}(x))}[\Psi(w_{1}(x))+c_{1}w_{1}(x)+c_{0}]\}\\
=&\Pi(0,0)\lim_{x\in S_{1}^{+},x \to 1}\left\{w_{1}^{(1)}(x)e^{\Gamma(w_{1}(x))}\left([\Psi(w_{1}(x))+w_{1}(x)c_{1}+c_{0}]\right.\right.\\
&\left.\left.\times\frac{1}{2\pi i}\int_{\tau\in \mathcal{S}_{1}}\frac{\log G(\tau)d\tau}{(\tau-w_{1}(x))^{2}}+\frac{1}{2\pi i}\int_{\tau\in \mathcal{S}_{1}}e^{-\Gamma_{1}^{(+)}(\tau)}\frac{\log S(\tau)d\tau}{(\tau-w_{1}(x))^{2}}\right)+1\right\},
\end{array}
\label{fgg}
\end{equation}
where $w_{1}^{(1)}(x)$ is the first derivative of $w_{1}(x)$ with respect to $x$. Note also that the expressions in (\ref{conf}) contain the function $\lambda(z)$, $z\in\mathcal{C}$, which in turn, corresponds to the solution of the integral equation (\ref{conf1}). Clearly, (\ref{conf1}) can be solved only numerically, whereas the determination of the inverse conformal mappings can also be done numerically. Several techniques, such as trapezoidal rule, and standard iteration procedures has shown rapid convergence based on the values of the parameters.
For a detailed treatment of how we treat numerically (\ref{conf}), (\ref{conf1}), (\ref{fgg}) see \cite{coh1}, Ch. IV.2.

Therefore, although the theory of boundary value problems provides a robust mathematical background to obtain expressions for the pgf of the stationary joint queue length distribution at user buffers, it is not an easy task to obtain numerical results; see \cite{coh1}, part IV. In the next section we provide an efficient approach to derive basic performance metrics without calling for the advanced concepts of theory of boundary value problems.
\section{The Symmetrical System: Mean-value Analysis}\label{sec:bounds}
In the following, we show how we can obtain explicit bounds for the average queueing delay, without calling for the advanced concepts of the theory of boundary value problems. Recall that symmetry implies that $\Pi(1,0)=\Pi_{2}(0,1)$, $\Pi_{1}(1,0)=\Pi_{2}(0,1)$, $\Pi_{1}(1,1)=\Pi_{2}(1,1)$. 

Setting in (\ref{fun}) $y=1$, differentiating with respect to $x$, and letting $x\to 1$, we obtain after some algebra
\begin{equation}
\Pi_{1}(1,1)=\frac{\lambda\bar{\lambda}+sl^{+}\frac{\lambda+sl^{+}}{s+\bar{s}^{2}\alpha\bar{\alpha}}+\Pi_{1}(1,0)[\bar{s}^{2}\alpha\bar{\alpha}-\bar{s}\alpha-sl^{+}]}{s+\bar{s}^{2}\alpha\bar{\alpha}-\lambda-sl^{+}}.
\label{s2}
\end{equation}
To derive (\ref{s2}) we used (\ref{s1}), and the balance condition\footnote{Note that (\ref{s3}) is a conservation of flow relation and equates in steady state the flow of packets into and out of a queue due to the presence of triggering signals.} 
\begin{equation}
sl^{+}\frac{\lambda+sl^{+}}{s+\bar{s}^{2}\alpha\bar{\alpha}}=s\bar{s}l^{+}[1-\Pi(1,0)-\Pi(0,1)+\Pi(0,0)]+sl^{+}[\Pi(1,0)-\Pi(0,0)].
\label{s3}
\end{equation}
Denote $\mathbf{L}=\Pi_{1}(1,1)=\Pi_{2}(1,1)$. Setting $x = y$ in (\ref{fun}), differentiating it with respect to $x$ at $x = 1$ we obtain after lengthy calculations 
\begin{equation}
\begin{array}{rl}
\mathbf{L}=&\frac{4\lambda[1-sl^{-}-\bar{s}^{2}\alpha\bar{a}]+\lambda^{2}-2sl^{-}-2\bar{s}^{2}\alpha\bar{a}+(4\lambda+2)(s+\bar{s}^{2}\alpha\bar{a}-\lambda-sl^{+})+2\Pi_{1}(1,0)[2\bar{s}^{2}\alpha\bar{a}-\bar{s}\alpha+sl^{-}]+(sl^{-})^{2}P(N_{1}>0,N_{2}>0)}{4[s+\bar{s}^{2}\alpha\bar{a}-\lambda-sl^{+}]}.
\end{array}
\label{s4}
\end{equation}
Using (\ref{s3}), (\ref{s4}), and that fact that $0\leq P(N_{1}>0,N_{2}>0)\leq1$ we obtain bounds for the expected queue length at the user buffers. The following theorem summarizes the main result of this section.
\begin{theorem}The following expressions are explicit bounds for the expected queue length at the users' buffers
\begin{equation}
\mathbf{L}_{up}=\mathbf{L}_{low}+\frac{(sl^{-})^{2}[sl^{+}+\bar{s}\alpha-\bar{s}^{2}\alpha\bar{\alpha}]}{2(s+\bar{s}\alpha+sl^{+})(s+\bar{s}^{2}\alpha\bar{\alpha}-\lambda-sl^{+})},\,\,\mathbf{L}_{low}=\mathbf{S},
\end{equation}
where,
\begin{displaymath}
\begin{array}{rl}
\mathbf{S}=&\frac{sl^{+}+\bar{s}\alpha-\bar{s}^{2}\alpha\bar{\alpha}}{2(s+\bar{s}\alpha+sl^{+})(s+\bar{s}^{2}\alpha\bar{\alpha}-\lambda-sl^{+})}\left\{4\lambda[1-sl^{-}-\bar{s}^{2}\alpha\bar{a}]+\lambda^{2}-2sl^{-}-2\bar{s}^{2}\alpha\bar{a}\right.\vspace{2mm}\\&\left.+(4\lambda+2)(s+\bar{s}^{2}\alpha\bar{a}-\lambda-sl^{+})-\frac{2[2\bar{s}^{2}\alpha\bar{a}-\bar{s}\alpha+sl^{-}](\lambda\bar{\lambda}+sl^{+}\frac{\lambda+sl^{+}}{s+\bar{s}^{2}\alpha\bar{\alpha}})}{\bar{s}^{2}\alpha\bar{\alpha}-sl^{+}-\bar{s}\alpha}\right\}.
\end{array}
\end{displaymath}
\end{theorem}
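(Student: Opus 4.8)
The plan is to read the two moment identities (\ref{s2}) and (\ref{s4}) as a linear system in the three unknowns $\mathbf{L}=\Pi_{1}(1,1)=\Pi_{2}(1,1)$, the boundary derivative $\Pi_{1}(1,0)$, and the joint probability $P(N_{1}>0,N_{2}>0)$ that both queues are nonempty, and to eliminate $\Pi_{1}(1,0)$ so as to express $\mathbf{L}$ as an affine function of the single remaining unknown $P(N_{1}>0,N_{2}>0)$. Both (\ref{s2}) and (\ref{s4}) are obtained from the functional equation (\ref{fun}) together with the conservation relation (\ref{s3}), so no additional structural input is required; the only genuinely unavailable quantity is $P(N_{1}>0,N_{2}>0)$, and the theorem converts precisely this indeterminacy into an explicit interval.

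Concretely, I would first solve (\ref{s2}) for $\Pi_{1}(1,0)$ as a linear function of $\mathbf{L}$, using that the coefficient $\bar{s}^{2}\alpha\bar{\alpha}-\bar{s}\alpha-sl^{+}$ multiplying $\Pi_{1}(1,0)$ is nonzero under the stability assumptions. Substituting this into the $\Pi_{1}(1,0)$-term of (\ref{s4}) removes $\Pi_{1}(1,0)$ and, after collecting all $\mathbf{L}$-terms on one side, leaves a single scalar identity of the form $\mathbf{L}=\mathbf{S}+\kappa\,P(N_{1}>0,N_{2}>0)$. Here $\mathbf{S}$ gathers everything independent of $P(N_{1}>0,N_{2}>0)$ and should collapse to the stated closed form $\mathbf{L}_{low}=\mathbf{S}$, while the slope $\kappa$ is the effective coefficient of the $(sl^{-})^{2}$ term of (\ref{s4}) after the elimination; the bookkeeping should reproduce $\kappa=\dfrac{(sl^{-})^{2}[sl^{+}+\bar{s}\alpha-\bar{s}^{2}\alpha\bar{\alpha}]}{2(s+\bar{s}\alpha+sl^{+})(s+\bar{s}^{2}\alpha\bar{\alpha}-\lambda-sl^{+})}$, which is exactly the announced gap $\mathbf{L}_{up}-\mathbf{L}_{low}$.

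Finally I would verify the relevant signs so as to read off the correct endpoints. Stability (Theorem \ref{stabil}) forces $s+\bar{s}^{2}\alpha\bar{\alpha}-\lambda-sl^{+}>0$, and $s+\bar{s}\alpha+sl^{+}>0$ trivially, so the common denominator appearing in $\mathbf{S}$ and in $\kappa$ is positive; moreover $sl^{+}+\bar{s}\alpha-\bar{s}^{2}\alpha\bar{\alpha}=sl^{+}+\bar{s}\alpha(1-\bar{s}\bar{\alpha})\geq 0$, whence $\kappa\geq 0$ and $\mathbf{L}$ is nondecreasing in $P(N_{1}>0,N_{2}>0)$. Inserting the admissible extremes $P(N_{1}>0,N_{2}>0)=0$ and $P(N_{1}>0,N_{2}>0)=1$ then gives $\mathbf{L}_{low}=\mathbf{S}$ and $\mathbf{L}_{up}=\mathbf{S}+\kappa$, which are the claimed bounds. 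I expect the main obstacle to be purely algebraic: performing the elimination of $\Pi_{1}(1,0)$ between the two nested-fraction relations (\ref{s2}) and (\ref{s4}) and simplifying the resulting rational expression so that it actually collapses to the compact form of $\mathbf{S}$ and of the gap term, rather than leaving an unwieldy quotient.
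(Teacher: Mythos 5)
Your proposal is correct and follows essentially the same route as the paper: the paper also combines the two moment identities (\ref{s2}) and (\ref{s4}) (the latter already incorporating (\ref{s3})), eliminates $\Pi_{1}(1,0)$, and converts the only remaining unknown $P(N_{1}>0,N_{2}>0)\in[0,1]$ into the two endpoints $\mathbf{L}_{low}$ and $\mathbf{L}_{up}$. Your explicit sign check that the slope $\kappa\geq 0$ (so that $P=0$ and $P=1$ give the lower and upper bounds respectively) is a worthwhile detail the paper leaves implicit, but it does not constitute a different argument.
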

\begin{remark}
Note that when $s=0$ (i.e., classical slotted ALOHA network) and/or $l^{-}=0$ (i.e., only triggering signals are allowed), $\mathbf{L}_{up}=\mathbf{L}_{low}$.
\end{remark}

\section{Numerical Results}\label{sec:num}

In this section we evaluate numerically the theoretical results obtained in the previous sections. We consider the case where the users have the same link characteristics and transmission probabilities to facilitate exposition clarity. 

\subsection{Stable throughput region and stability region}

Here we present the numerical results regarding the stability conditions and the stable throughput region as presented in Section \ref{sec:SCSTR}.
As we mentioned in that section, the stable throughput region is a subset of the stability region. In Figure \ref{fig:Closure}, we present the closure of the stability region and the stable throughput region for the external arrival probabilities for the following three cases for $s_1=s_2=0.1, 0.2, 0.4$. We consider two cases regarding the probabilities $l_i^{+}=0.2, 0.4$. By closure, we mean over all the possible transmission probabilities $\alpha_{i}, i=1,2$.

\begin{figure}[ht!]
\centering
\subfigure{\includegraphics[width=.45\textwidth]{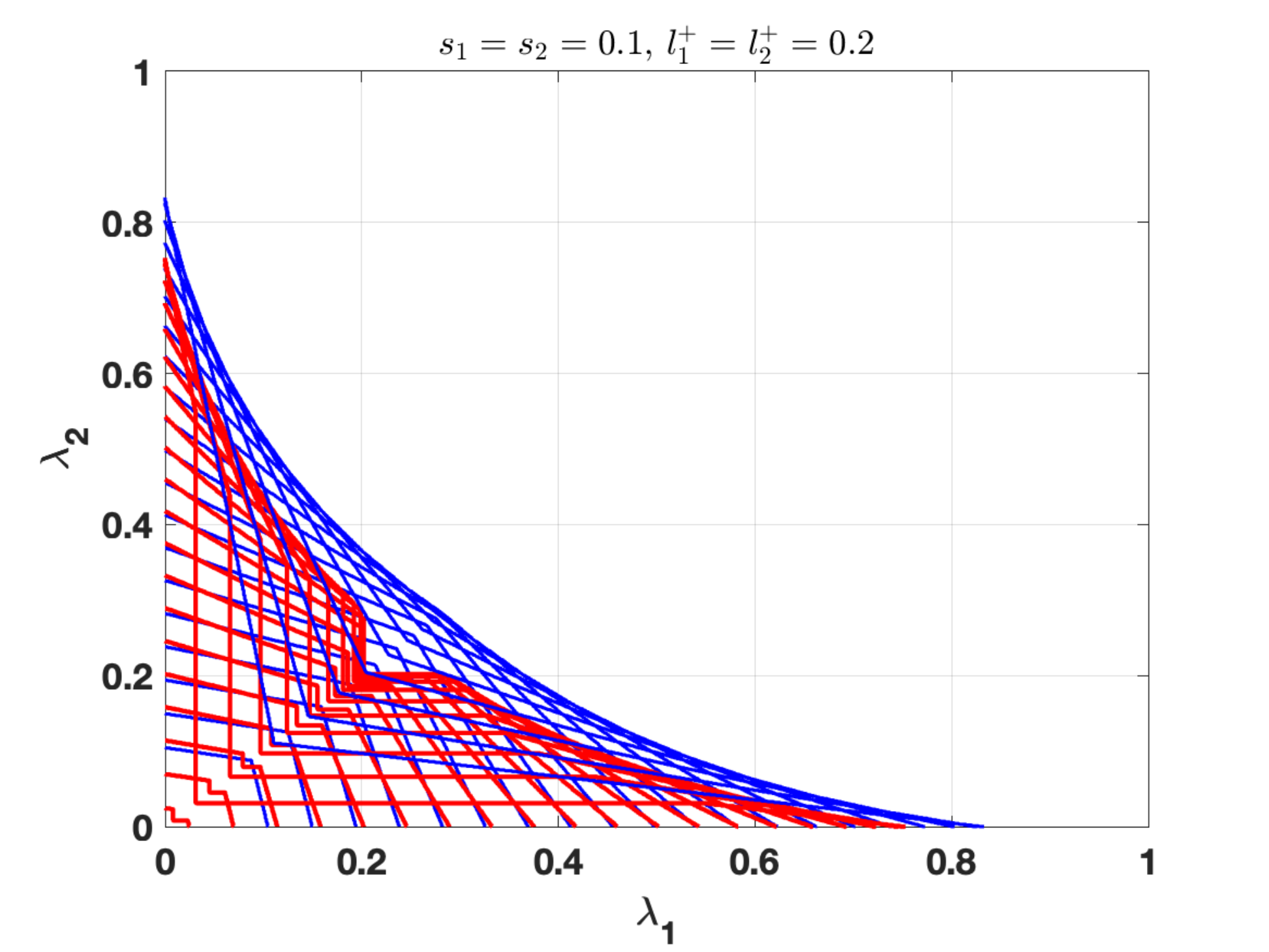}}
\subfigure{\includegraphics[width=.45\textwidth]{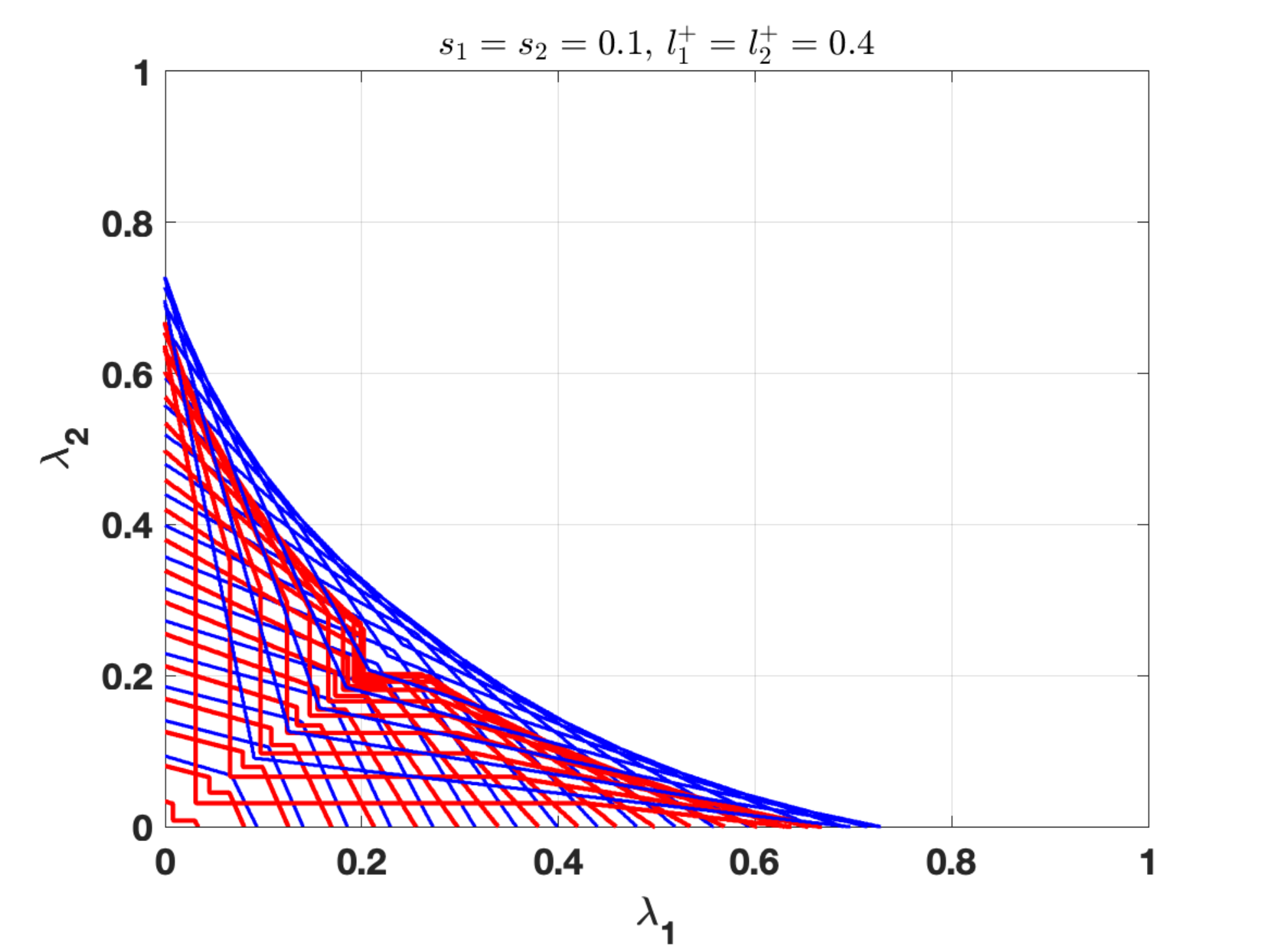}} \\
\subfigure{\includegraphics[width=.45\textwidth]{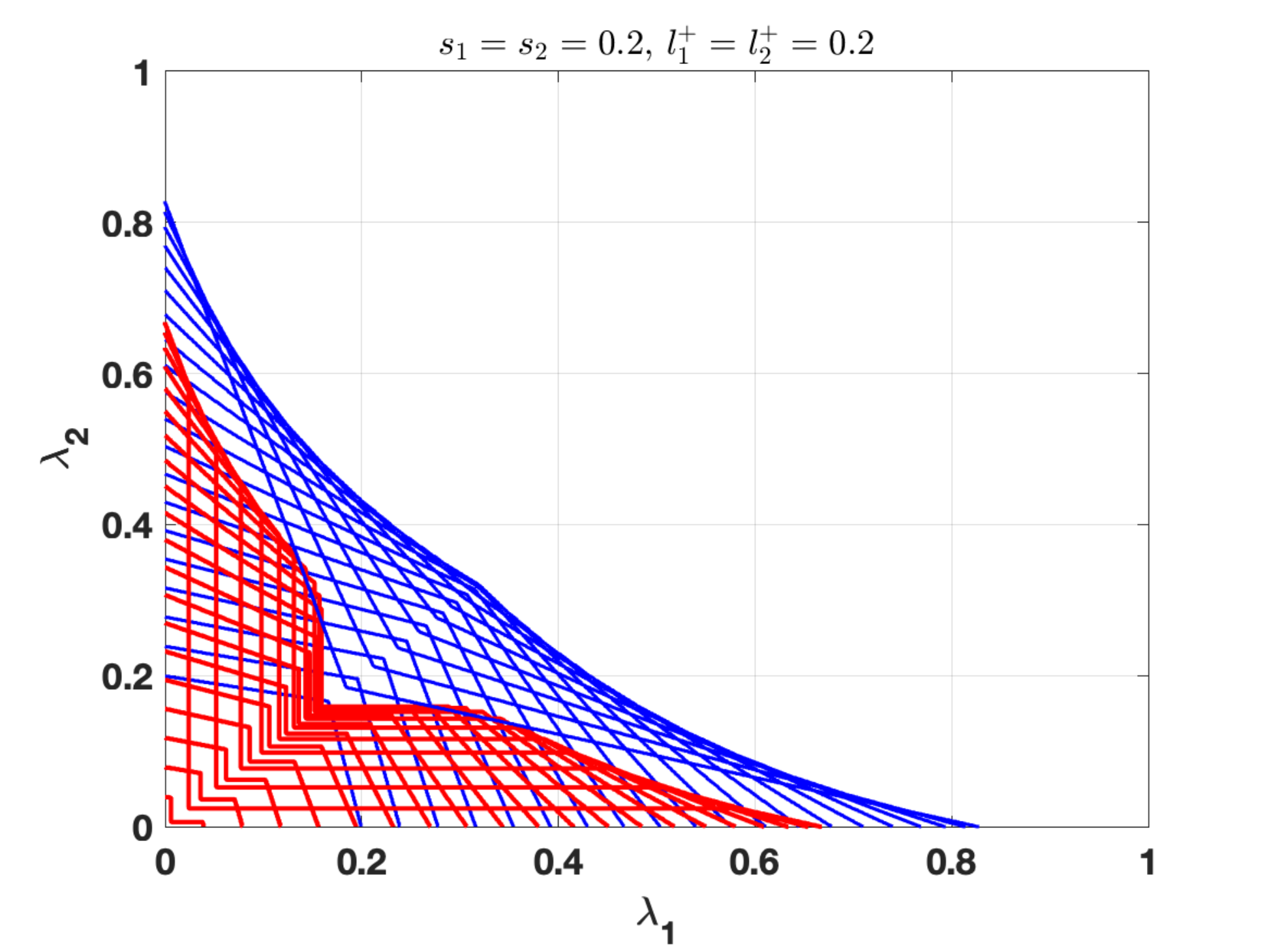}}
\subfigure{\includegraphics[width=.45\textwidth]{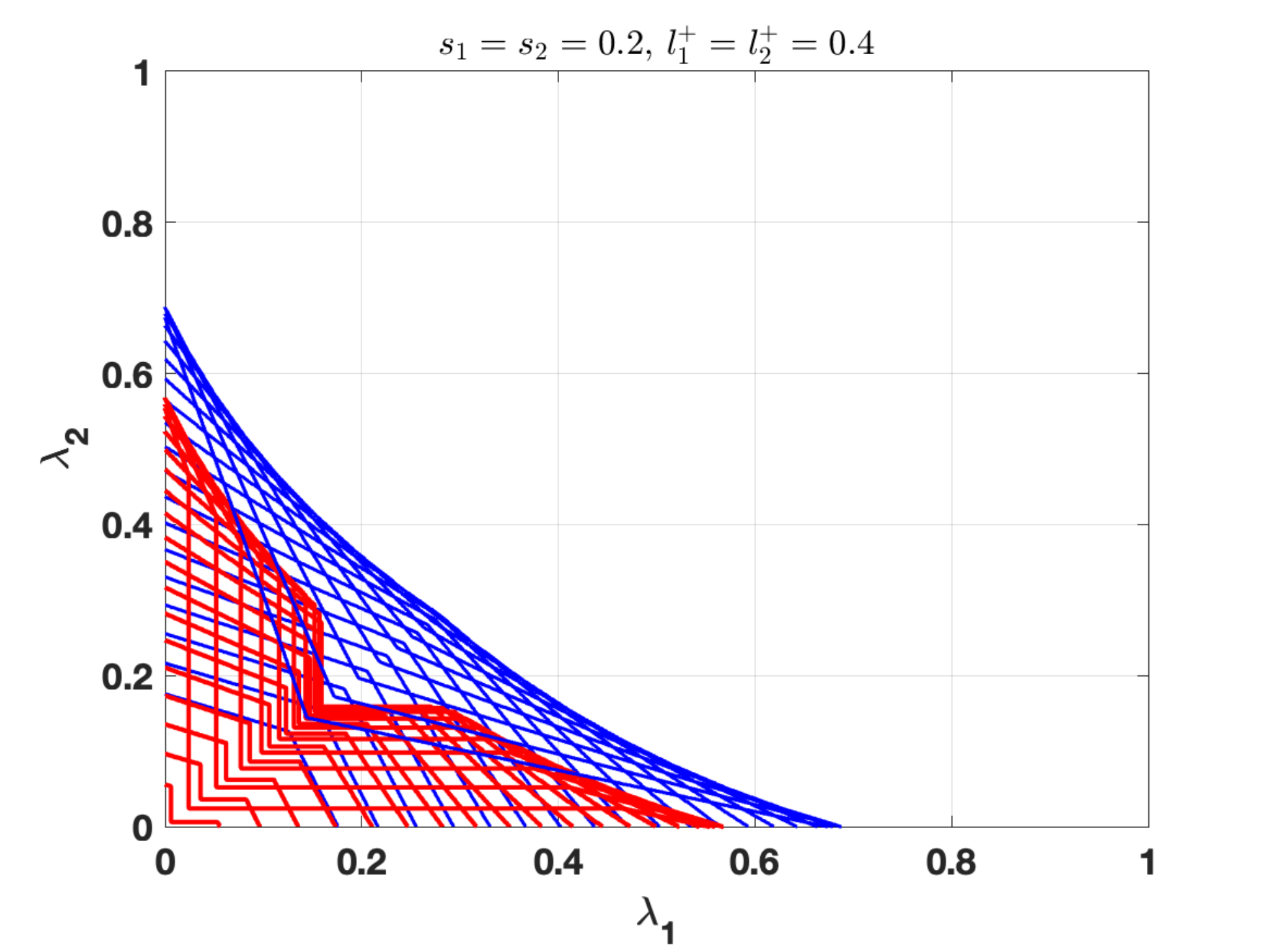}}	\\
\subfigure{\includegraphics[width=.45\textwidth]{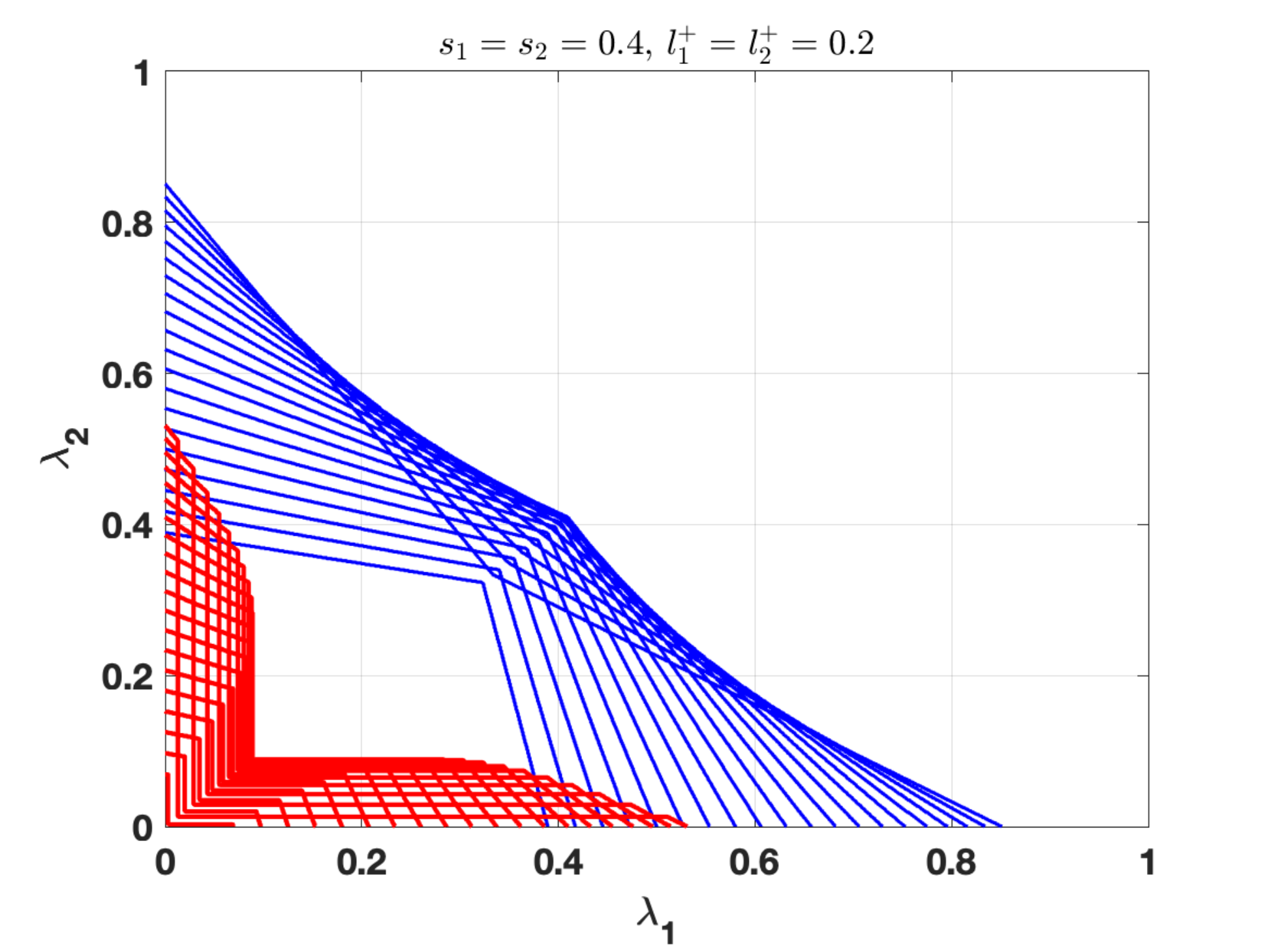}}
\subfigure{\includegraphics[width=.45\textwidth]{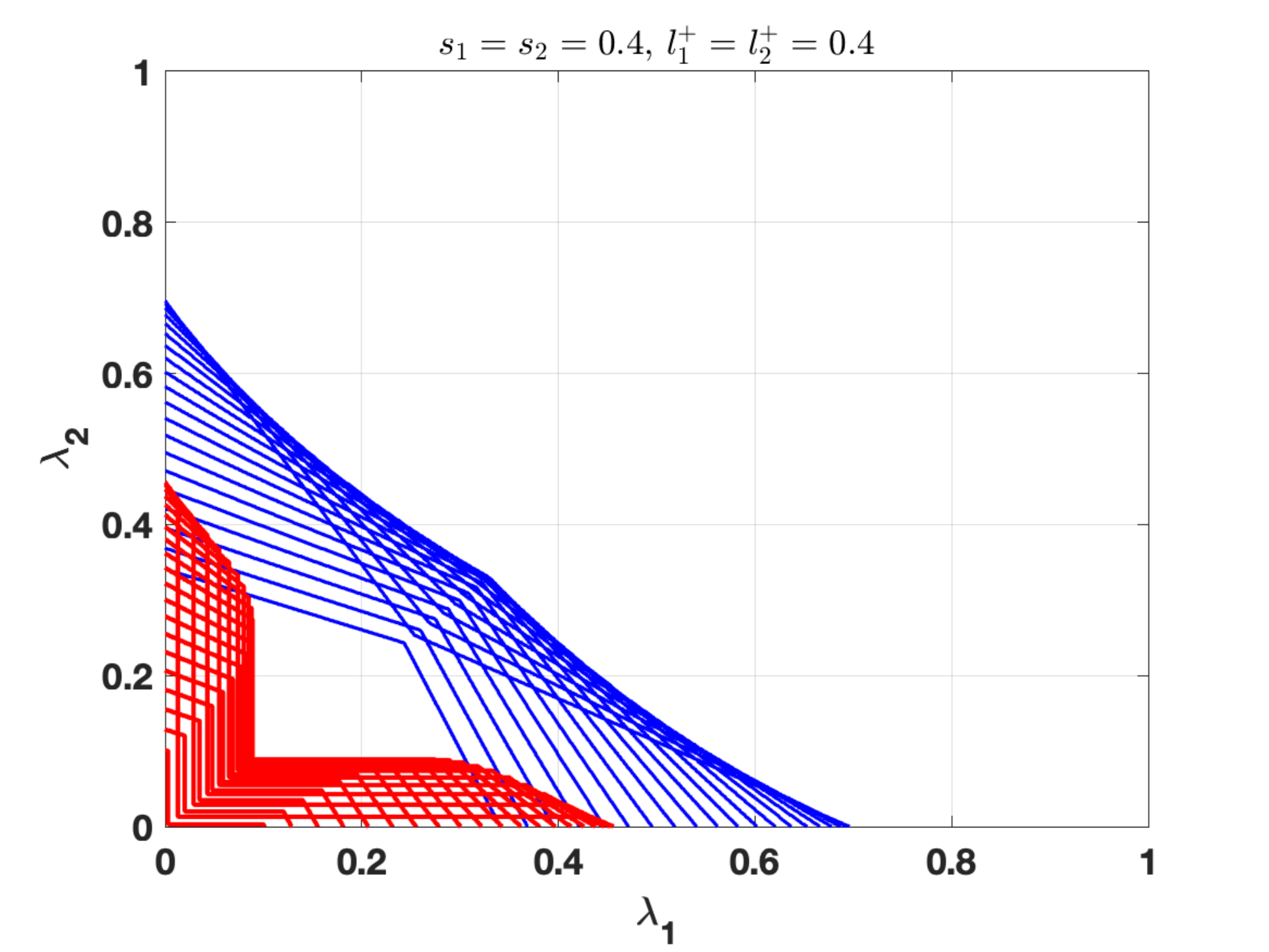}}
\caption{The closure of the stability region and the stable throughput region. The blue lines are for the stability region and the red ones for the stable throughput region.}
\label{fig:Closure}
\end{figure}

As $s_i$ increases, the stability region is becoming slightly broader, this is expected since less packets are transmitted thus we have fewer collisions.
In addition, the stable throughput region is becoming smaller, since the achievable throughput is less, we observe a steep drop in the medium values of the external arrivals, this is because in the expression of an empty queue that we utilized in the dominant systems, the denominator is based on the probability of a packet departure from the queue $m_i$ (including the packet drops that are excluded from the throughput), instead of the probability of service, $\mu_i$.

We observe that for the same value of $s_i$, with the increase of $l_i^{+}$, both regions become smaller, due to internal packet movement which allow for less external traffic arrivals in order to sustain the stability of the system. However, for the stable throughput region, we observe that for the low arrival rate for the one user and the high arrival rate for the other user we got less achievable stable throughout, however, in the medium ones we can achieve a larger region. This is expected due to the smaller value of $l_i^{-}$ which causes the drops of the packets from the system.

\subsection{Performance analysis}
In the following, we focus on how the system parameters affect the expected number of packets.

Figure \ref{fig1} shows the way $l^{+}$ affects the bounds $\mathbf{L}_{low}$, $\mathbf{L}_{up}$, by letting the transmission probability $\alpha$ taking values in $[0.3,0.6]$, and assuming $s=0.2$, $\lambda=0.1$. Clearly, letting $s$ to be small the bounds become tighter, especially when $l^{+}$ increases too. Thus, by enhancing load balancing we can achieve tighter bounds. Moreover, it is easily realized that the increase in transmission probability will decrease the expected number of buffering packets.   

\begin{figure}[ht!]
\centering
\includegraphics[scale=0.6]{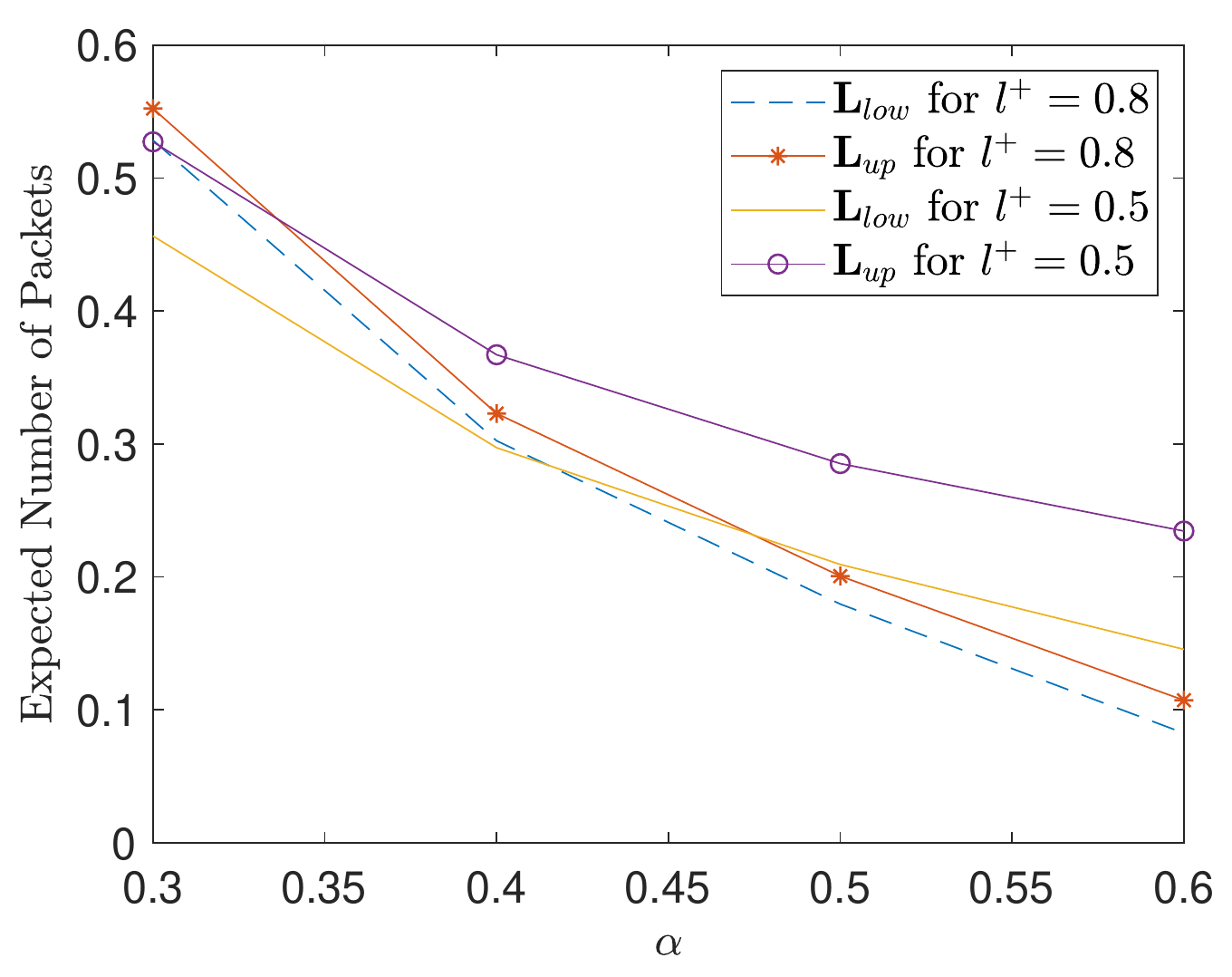}
\caption{Effect of $l^{+}$ on the bounds of the expected number of packets for $s=0.2$, $\lambda=0.1$.}\label{fig1}
\end{figure}

In Figure \ref{fig2} we observe the way $\mathbf{L}_{low}$, $\mathbf{L}_{up}$ are affected for increased values of $l^{+}$. By increasing $l^{+}$ we enhance load balancing. In particular, we can see that as $s$ takes small values, the more we increase $l^{+}$, the better the system behaves, since we can see the decrease on the expected number of buffering packets.
\begin{figure}[ht!]
\centering
\includegraphics[scale=0.6]{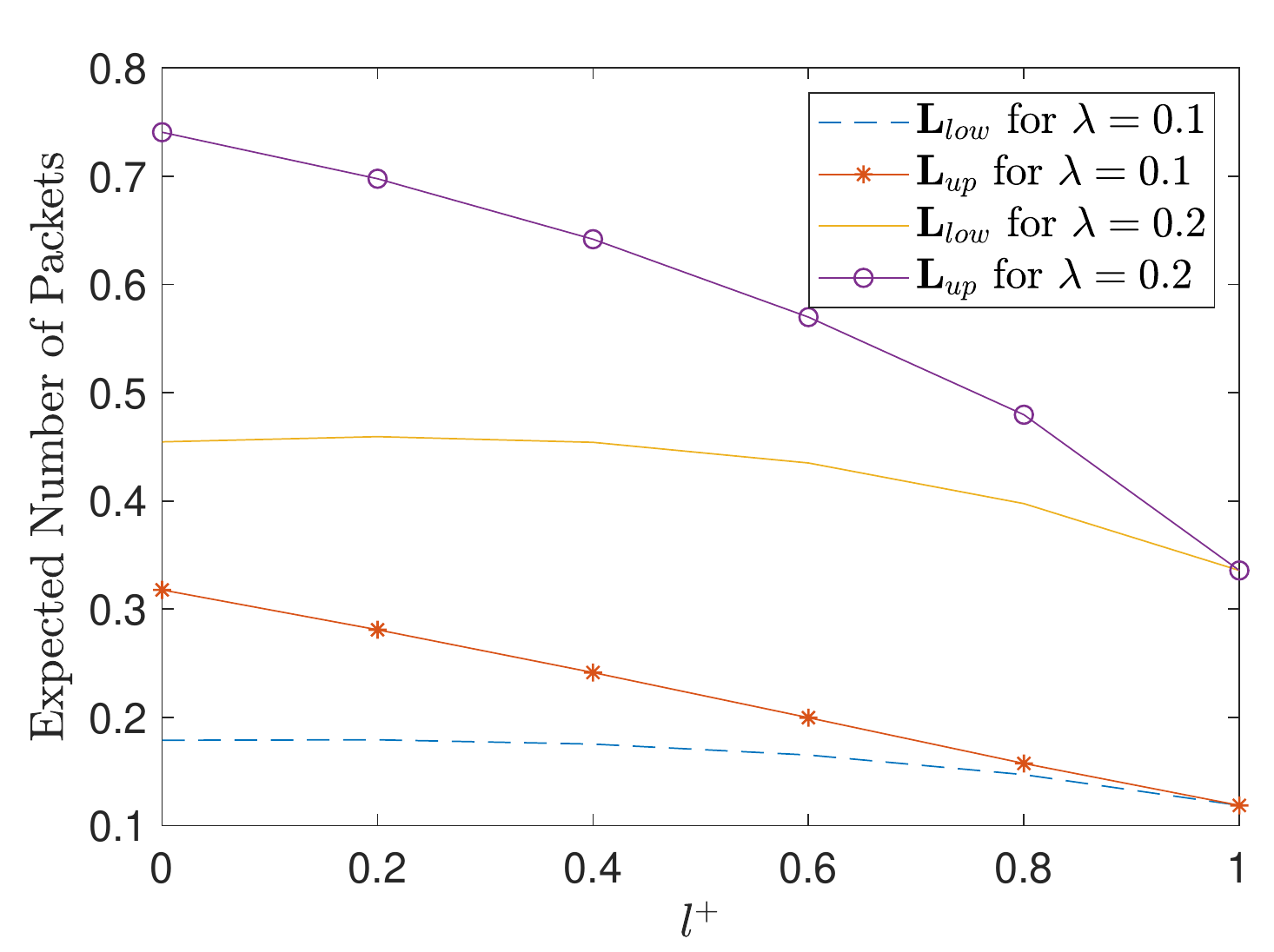}
\caption{Effect of $\lambda$ on the bounds of the expected number of packets for $\alpha=0.6$, $s=0.1$.} \label{fig2}
\end{figure}

Figure \ref{fig3} shows that the increase in $\lambda$ will definitely increase the expected number of buffering packets, but our results shows that in such a case, if $s$ takes large values, the bounds become tighter. This means that the increase in packet arrivals can be balanced by the increase in signal generation, and especially when the probability of a deleting signal (i.e., $l^{-}=0.9$) is large. 

\begin{figure}[ht!]
\centering
\includegraphics[scale=0.6]{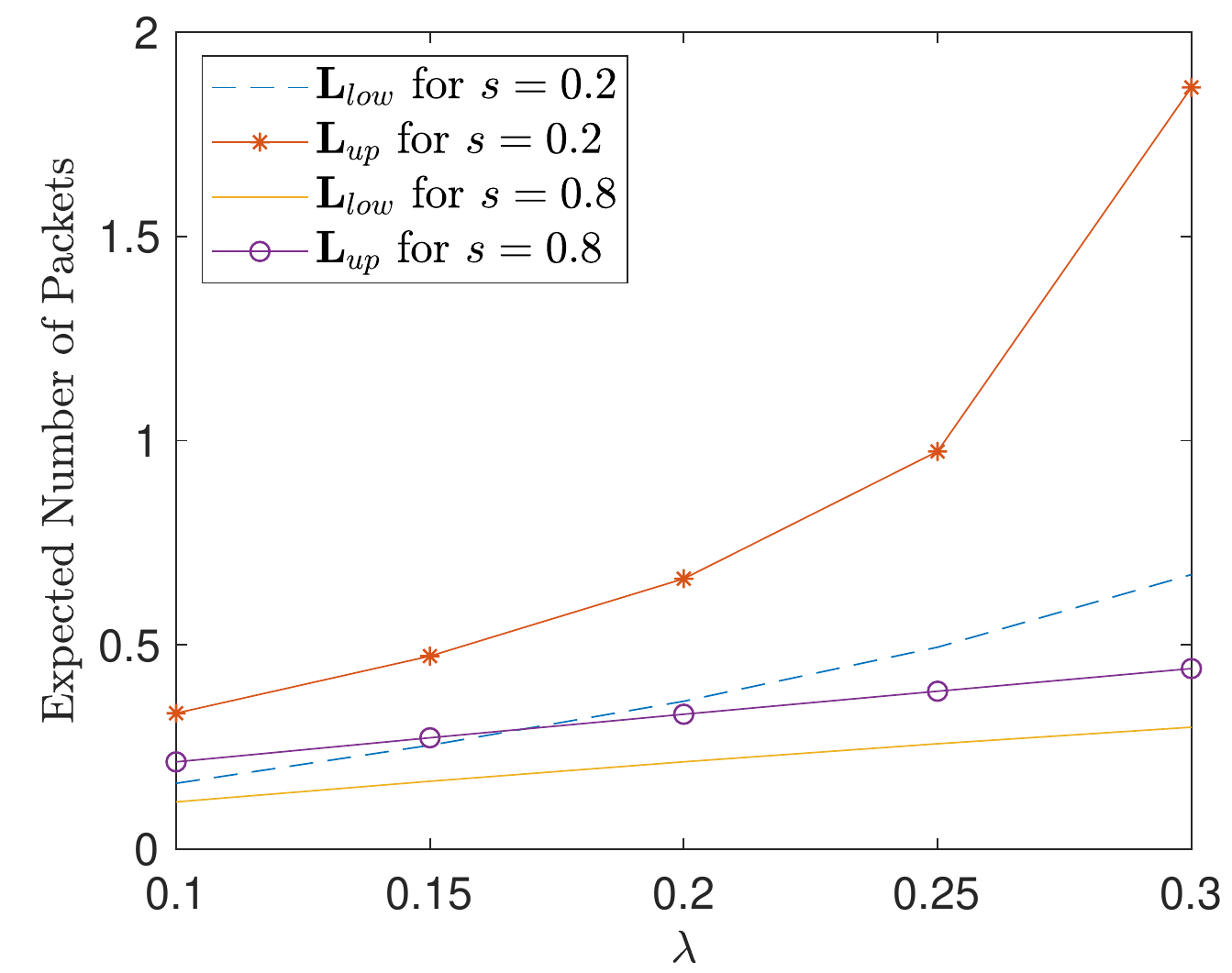}
\caption{Effect of $s$ on the bounds of the expected number of packets for $\alpha=0.6$, $l^{+}=0.1$.} \label{fig3}
\end{figure}
\subsubsection{A comparative study}
In the following, we focus on the impact of signals on the system performance by providing a comparative study between the RAGN and the standard RAN without signals (i.e., $s=0$). For this reason we avoid considering the case of negative signals, and assuming only triggering signals, i.e., $l^{+}=1$. Our aim is to see when load balancing improves the system performance. 
\begin{figure}[ht!]
\centering
\includegraphics[scale=0.52]{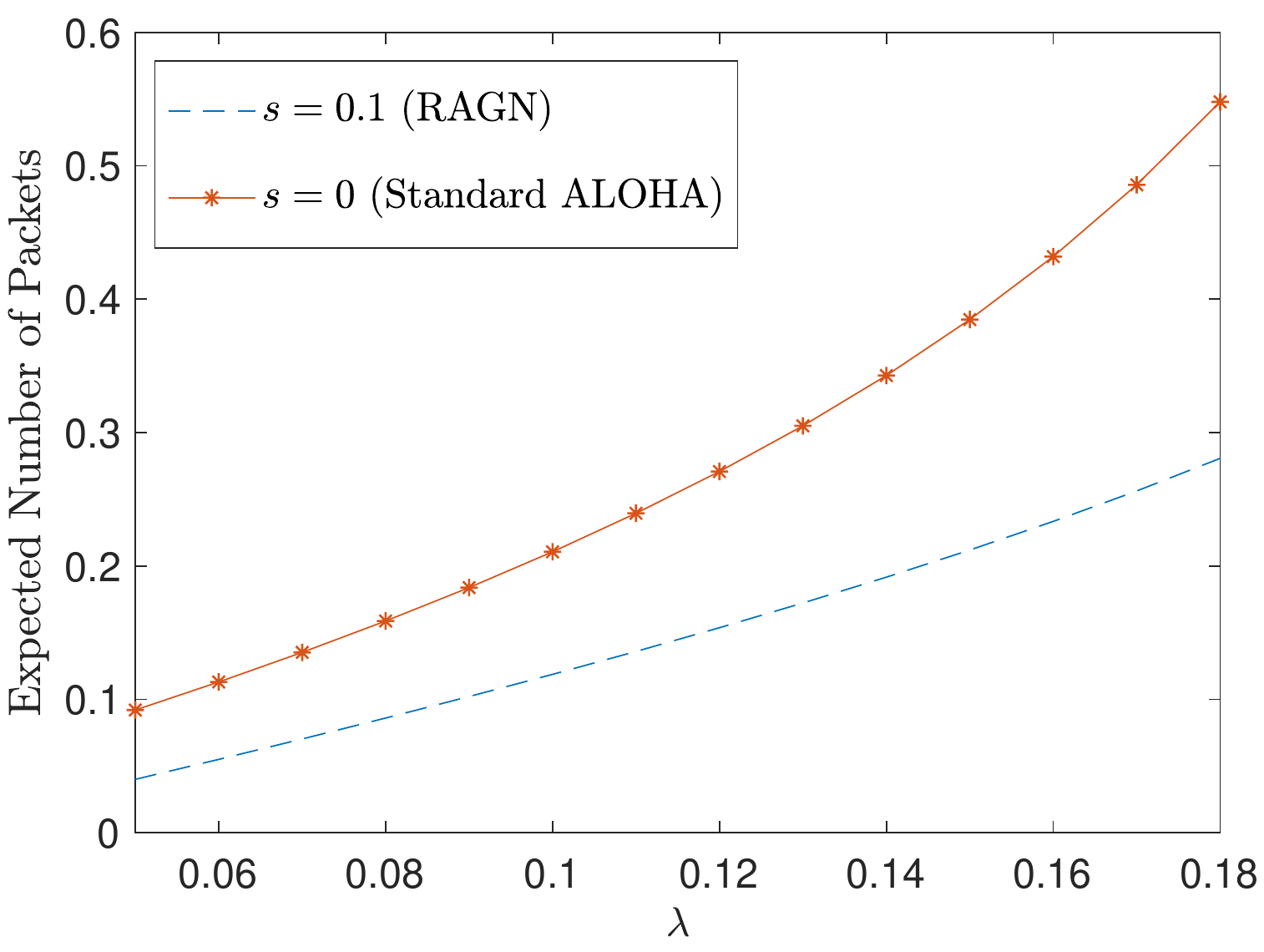}
\caption{Comparison of RAGN and standard ALOHA network for $\alpha=0.6$.} \label{fig4}
\end{figure}

Figure \ref{fig4} shows that RAG network with only triggering signals is superior with respect to the standard ALOHA. In particular, by assuming the the RAG network with only triggering signals (i.e., considering load balancing) with $s=0.1$ (and $l^{+}=1$), we succeeded better performance than the standard ALOHA network (i.e., $s=0$). Clearly, the network performance can be improved further when assuming deleting signals. Thus, the effect of triggering signals will result in better performance and will improve the quality of service in a multiple access network. 

\section{Conclusions and Future Directions}\label{sec:con}
In this paper we introduced for the first time in the related literature the concept of signals in random access networks, by introducing the so-called Random Access G-Network. We considered both negative signals that delete a packet from a user's queue, and triggering signals that cause the instantaneous transfer of packets among user queues. For this interacting network of queues, we obtained both the stability (SR) and the stable throughput (STR) regions using the stochastic dominance approach. We shown that STR is a subset of SR, a result that has not been reported in the related literature. 

Moreover, we provided a compact mathematical analysis and obtain expressions for the pgf of the stationary joint queue length distribution at user queues with the aid of the theory of Riemann boundary value problems. A computationally efficient way to obtain exact bounds for the expected queue length was also given without calling for the advanced concepts of boundary value problems. Numerical results were also obtained and shown useful insights. 

Future extensions of this work include the consideration of more realistic models for the wireless channels such as the erasure and the multi packet reception channel. In addition, security in Internet-of-Things networks can be studied with the consideration of G-networks. Network-level cooperative networks is another interesting direction to be considered.

\section*{Acknowledgements}
This work was supported in part by ELLIIT and the Center for Industrial Information Technology (CENIIT).

\begin{appendices}
\section{An alternative way to derive stability condition}\label{stab}
In the following we derive the stability condition based on Theorem 6.1 in \cite{rw}, pp. 95-96. We start with some necessary notation. Define $\mu_{k}=\frac{d}{dx}\phi_{k}(x,1)|_{x=1}$, $\nu_{k}=\frac{d}{dy}\phi_{k}(1,y)|_{y=1}$, $k=1,2,3$. Then, it is readily seen that
\begin{displaymath}
\begin{array}{rl}
\mu_{3}=&\lambda_{1}+s_{2}l_{2}^{+}+1-(s_{1}+\bar{s}_{1}\bar{s}_{2}\alpha_{1}\bar{\alpha}_{2}),\\
\nu_{3}=&\lambda_{2}+s_{1}l_{1}^{+}+1-(s_{2}+\bar{s}_{1}\bar{s}_{2}\alpha_{2}\bar{\alpha}_{1}),\\
\mu_{1}=&\lambda_{1}+\bar{s}_{1}\bar{\alpha}_{1},\\
\nu_{1}=&\lambda_{2}+s_{1}l_{1}^{+},\\
\mu_{2}=&\lambda_{1}+s_{2}l_{2}^{+},\\
\nu_{2}=&\lambda_{2}+\bar{s}_{2}\bar{\alpha}_{2}.
\end{array}
\end{displaymath}
Denote also $r_{1}=\mu_{1}-1+\nu_{1}\frac{1-\mu_{3}}{1-\nu_{3}}$, $r_{2}=\nu_{2}-1+\mu_{2}\frac{1-\nu_{3}}{1-\mu_{3}}$. We will state only the most interesting part of the ergodicity theorem. For a complete discussion see \cite{rw}. Since under condition $\mu_{3}>1$, $\nu_{3}>1$, the system is unstable (see lemma 6.1, p. 94 in \cite{rw}), we assume hereon that
\begin{displaymath}
\mu_{3}<1\Leftrightarrow \lambda_{1}+s_{2}l_{2}^{+}<(s_{1}+\bar{s}_{1}\bar{s}_{2}\alpha_{1}\bar{\alpha}_{2}).
\end{displaymath}
Then, following Theorem 6.1, pp. 95-96 in \cite{rw} we have the following result:
\begin{theorem}\label{stab1}
\begin{enumerate}
\item If $\nu_{3}<1\Leftrightarrow \lambda_{2}+s_{1}l_{1}^{+}<(s_{2}+\bar{s}_{1}\bar{s}_{2}\alpha_{2}\bar{\alpha}_{1})$, then, $\mathbf{Q}_{n}$ is
\begin{enumerate}
\item positive recurrent iff $r_{1}<0$, $r_{2}<0$, or equivalently
\begin{equation}
\begin{array}{l}
\lambda_{1}+s_{2}l_{2}^{+}\frac{\lambda_{2}+s_{1}l_{1}^{+}}{s_{2}+\bar{s}_{1}\bar{s}_{2}\alpha_{2}\bar{\alpha}_{1}}<(s_{1}+\bar{s}_{1}\alpha_{1})(1-\frac{\lambda_{2}+s_{1}l_{1}^{+}}{s_{2}+\bar{s}_{1}\bar{s}_{2}\alpha_{2}\bar{\alpha}_{1}})+\frac{\lambda_{2}+s_{1}l_{1}^{+}}{s_{2}+\bar{s}_{1}\bar{s}_{2}\alpha_{2}\bar{\alpha}_{1}}(s_{1}+\bar{s}_{1}\bar{s}_{2}\alpha_{1}\bar{\alpha}_{2}),\\
\lambda_{2}+s_{1}l_{1}^{+}\frac{\lambda_{1}+s_{2}l_{2}^{+}}{s_{1}+\bar{s}_{1}\bar{s}_{2}\alpha_{1}\bar{\alpha}_{2}}<(s_{2}+\bar{s}_{2}\alpha_{2})(1-\frac{\lambda_{1}+s_{2}l_{2}^{+}}{s_{1}+\bar{s}_{1}\bar{s}_{2}\alpha_{1}\bar{\alpha}_{2}})+\frac{\lambda_{1}+s_{2}l_{2}^{+}}{s_{1}+\bar{s}_{1}\bar{s}_{2}\alpha_{1}\bar{\alpha}_{2}}(s_{2}+\bar{s}_{1}\bar{s}_{2}\alpha_{2}\bar{\alpha}_{1}).
\end{array}\label{we}
\end{equation}
\item null-recurrent iff $r_{1}\leq0$, $r_{1}=0=r_{2}$, $r_{2}\leq0$.
\item transient iff $r_{1}>0$ or $r_{2}>0$.
\end{enumerate} 
\item If $\nu_{3}\ge1\Leftrightarrow \lambda_{2}+s_{1}l_{1}^{+}\geq(s_{2}+\bar{s}_{1}\bar{s}_{2}\alpha_{2}\bar{\alpha}_{1})$, then, $\mathbf{Q}_{n}$ is
\begin{enumerate}
\item positive recurrent iff $r_{2}<0$, or equivalently
\begin{displaymath}
\begin{array}{l}
\lambda_{2}+s_{1}l_{1}^{+}\frac{\lambda_{1}+s_{2}l_{2}^{+}}{s_{1}+\bar{s}_{1}\bar{s}_{2}\alpha_{1}\bar{\alpha}_{2}}<(s_{2}+\bar{s}_{2}\alpha_{2})(1-\frac{\lambda_{1}+s_{2}l_{2}^{+}}{s_{1}+\bar{s}_{1}\bar{s}_{2}\alpha_{1}\bar{\alpha}_{2}})+\frac{\lambda_{1}+s_{2}l_{2}^{+}}{s_{1}+\bar{s}_{1}\bar{s}_{2}\alpha_{1}\bar{\alpha}_{2}}(s_{2}+\bar{s}_{1}\bar{s}_{2}\alpha_{2}\bar{\alpha}_{1}).
\end{array}
\end{displaymath} 
\item null-recurrent iff $r_{2}=0$.
\item transient iff $r_{2}>0$.
\end{enumerate} 
\end{enumerate}
\end{theorem}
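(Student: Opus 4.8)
The plan is to recognize $\mathbf{Q}_{n}$ as a homogeneous random walk on the lattice quarter plane $\mathbb{N}_{0}\times\mathbb{N}_{0}$ whose transition structure is dictated by the functional equation (\ref{fun}): the one-step jump distribution in the interior is generated by $\phi_{3}(x,y)$, whereas the reflecting dynamics along the horizontal and vertical boundary axes are generated by $\phi_{1}(x,y)$ and $\phi_{2}(x,y)$ respectively. Since the model section guarantees that $\mathbf{Q}_{n}$ is irreducible and aperiodic, and (in the Bernoulli case) the jumps are of bounded range, the walk falls exactly within the scope of the classification in Theorem 6.1, pp. 95--96 of \cite{rw}. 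The entire argument is therefore an application of that theorem: I identify the relevant drift quantities, verify its hypotheses, and translate the abstract recurrence criteria into the explicit inequalities stated.

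First I would record the mean drift vectors. The interior horizontal and vertical drifts are $\mu_{3}-1$ and $\nu_{3}-1$, so that the sign conditions $\mu_{3}<1$ and $\nu_{3}<1$ are precisely the statements that the interior drift points strictly into the quarter plane in each coordinate; these equivalences are exactly the displayed reformulations in terms of $\lambda_{1}+s_{2}l_{2}^{+}$ and $\lambda_{2}+s_{1}l_{1}^{+}$. The boundary drifts are captured by $(\mu_{1},\nu_{1})$ along one axis and by $(\mu_{2},\nu_{2})$ along the other, computed from $\mu_{k}=\frac{d}{dx}\phi_{k}(x,1)|_{x=1}$ and $\nu_{k}=\frac{d}{dy}\phi_{k}(1,y)|_{y=1}$; these are the values listed just before the statement.

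Next I would invoke the dichotomy of Theorem 6.1 in \cite{rw}. Under the standing assumption $\mu_{3}<1$ there are two regimes. When additionally $\nu_{3}<1$ (both interior drifts inward), ergodicity is governed by the tilted boundary-drift quantities $r_{1}=\mu_{1}-1+\nu_{1}\frac{1-\mu_{3}}{1-\nu_{3}}$ and $r_{2}=\nu_{2}-1+\mu_{2}\frac{1-\nu_{3}}{1-\mu_{3}}$: positive recurrence holds iff $r_{1}<0$ and $r_{2}<0$, null recurrence in the boundary cases $r_{1}=r_{2}=0$ with the accompanying sign constraints, and transience as soon as either $r_{1}>0$ or $r_{2}>0$. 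When $\nu_{3}\ge1$ the vertical interior drift no longer pulls inward, and the classification collapses to the single condition on $r_{2}$, as stated. To finish, I would substitute the computed values of $\mu_{k},\nu_{k}$ into $r_{1},r_{2}$; since $1-\mu_{3}>0$ and $1-\nu_{3}>0$ throughout the relevant regime, clearing these positive denominators turns $r_{1}<0$ and $r_{2}<0$ directly into the pair of inequalities (\ref{we}).

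The conceptual difficulty is entirely absorbed into \cite{rw}, so the genuine work here is verification rather than discovery. The main obstacle is the careful matching of the model to the canonical form of \cite{rw}: one must confirm that the reflecting-boundary transition laws on the two axes are genuinely generated by $\phi_{1}$ and $\phi_{2}$ in the manner the theorem requires, and that the non-degeneracy hypotheses — bounded jumps, a well-defined interior drift, and exclusion of the doubly non-negative-drift instability case already dispatched by Lemma 6.1, p. 94 of \cite{rw} — are all met. Once this dictionary is fixed, the remaining algebra reducing $r_{1},r_{2}<0$ to (\ref{we}) is routine and driven only by the positivity of $1-\mu_{3}$ and $1-\nu_{3}$.
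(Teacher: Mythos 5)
Your overall route is the same as the paper's: the appendix proof is nothing more than computing the drift quantities and citing the classification theorem of \cite{rw}, which is exactly what you propose. However, there is a genuine gap at the one step you dismiss as routine, namely the claim that clearing denominators turns $r_{1}<0$, $r_{2}<0$ into (\ref{we}). With $r_{1}$ as you (and the paper's preamble) define it, $r_{1}=\mu_{1}-1+\nu_{1}\frac{1-\mu_{3}}{1-\nu_{3}}$, multiplying by $1-\nu_{3}>0$ gives the first condition below; on the other hand, substituting $\mu_{1}-1=\lambda_{1}-(s_{1}+\bar{s}_{1}\alpha_{1})$, $\nu_{1}=\lambda_{2}+s_{1}l_{1}^{+}$, $1-\mu_{3}=(s_{1}+\bar{s}_{1}\bar{s}_{2}\alpha_{1}\bar{\alpha}_{2})-(\lambda_{1}+s_{2}l_{2}^{+})$, $1-\nu_{3}=(s_{2}+\bar{s}_{1}\bar{s}_{2}\alpha_{2}\bar{\alpha}_{1})-(\lambda_{2}+s_{1}l_{1}^{+})$ and expanding, the first inequality of (\ref{we}) is equivalent to the second condition below:
\begin{displaymath}
r_{1}<0 \Longleftrightarrow (\mu_{1}-1)(1-\nu_{3})+\nu_{1}(1-\mu_{3})<0,
\qquad
\mbox{first ineq.\ of (\ref{we})} \Longleftrightarrow (\mu_{1}-1)(1-\nu_{3})-\nu_{1}(1-\mu_{3})<0.
\end{displaymath}
These two differ by $2\nu_{1}(1-\mu_{3})>0$, so the plus-sign condition is strictly stronger than (\ref{we}) whenever $\nu_{1}=\lambda_{2}+s_{1}l_{1}^{+}>0$; they are not equivalent, and the same mismatch occurs for $r_{2}$.

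The minus sign is what the cited theorem actually delivers, and your heuristic shows why: in the regime $M_{x}=\mu_{3}-1<0$, $M_{y}=\nu_{3}-1<0$, ergodicity is decided by the induced (``second vector field'') drift along each axis. A step on the horizontal axis displaces the walk by $\mu_{1}-1$ horizontally and lifts it by $\nu_{1}\geq 0$; the ensuing interior excursion lasts roughly $\nu_{1}/(1-\nu_{3})$ slots and during it the walk drifts horizontally at rate $\mu_{3}-1<0$, contributing $-\nu_{1}\frac{1-\mu_{3}}{1-\nu_{3}}$. So the criterion is $(\mu_{1}-1)-\nu_{1}\frac{1-\mu_{3}}{1-\nu_{3}}<0$ and $(\nu_{2}-1)-\mu_{2}\frac{1-\nu_{3}}{1-\mu_{3}}<0$: your ``tilted boundary drift'' tilts the wrong way, because the interior drift it rides on is negative. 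With the minus form, the algebra you describe does reduce the conditions exactly to (\ref{we}), which is also the form consistent with the region $\mathcal{R}_{1}$ of Theorem \ref{stabil}. In fairness, the plus sign is a typo in the paper's own definition of $r_{1},r_{2}$; but since the entire substance of this proof is matching the model to the canonical form of \cite{rw} and verifying the equivalence, a correct write-up must catch it --- as stated, neither your citation of the theorem (ergodic iff $r_{1}<0,r_{2}<0$ with the plus sign) nor your claimed reduction to (\ref{we}) holds.
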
 

\end{appendices}

\bibliography{gnet} 
\bibliographystyle{ieeetr}

\end{document}